\RequirePackage{etex}
\documentclass[preprint,12pt]{elsarticle}

\usepackage{amssymb}
\usepackage{amsmath}
\usepackage{lineno} 
\usepackage{color}
\usepackage{autonum}
\usepackage{booktabs}

\renewcommand{\b}[1]{{\boldsymbol #1}}
\newcommand{\x}{\mathbf{x}}
\newcommand{\z}{\mathbf{z}}
\newcommand{\cs}{\stackrel{a.s.}{\longrightarrow}}
\newtheorem{theo}{Theorem}[section]
\newtheorem{cor}{Corollary}[section]
\newtheorem{defi}{Definition}[section]
\newtheorem{lemma}{Lemma}[section]
\newtheorem{Remark}{Remark}[section]
\DeclareMathOperator*{\argmin}{\textrm{argmin}}
\DeclareMathOperator*{\logit}{\textrm{logit}}

\begin{document}

\begin{frontmatter}

\title{Estimating the Partially Linear Zero-Inflated Poisson Regression Model: a Robust Approach Using a EM-like Algorithm}

\author[add1,add2]{María José Llop\corref{cor1}}
\ead{llopmariajose@gmail.com}
\cortext[cor1]{Corresponding author}

\author[add1]{Andrea Bergesio}
\author[add3]{Anne-Fran\c{c}oise Yao}

\address[add1]{Departamento de Matemática, Facultad de Ingeniería Química, Universidad Nacional del Litoral, Argentina}
\address[add2]{CONICET, Argentina}
\address[add3]{Laboratoire de Mathématiques, Université Clermont Auvergne, France}

\begin{abstract}

Count data with an excessive number of zeros frequently arise in fields such as economics, medicine, and public health. Traditional count models often fail to adequately handle such data, especially when the relationship between the response and some predictors is nonlinear. To overcome these limitations, the partially linear zero-inflated Poisson (PLZIP) model has been proposed as a flexible alternative. However, all existing estimation approaches for this model are based on likelihood, which is known to be highly sensitive to outliers and slight deviations from the model assumptions. This article presents the first robust estimation method specifically developed for the PLZIP model. An Expectation-Maximization-like algorithm is used to take advantage of the mixture nature of the model and to address extreme observations in both the response and the covariates. Results of the algorithm convergence and the consistency of the estimators are proved. A simulation study under various contamination schemes showed the robustness and efficiency of the proposed estimators in finite samples, compared to classical estimators. Finally, the application of the methodology is illustrated through an example using real data.
\end{abstract}

\begin{keyword}
Generalized partially linear model, Zero-inflated Poisson regression model, EM-like algorithm, Robust estimation, Asymptotic properties. 
\end{keyword}

\end{frontmatter}

\section{Introduction}\label{intro} 

Count data frequently arise in fields such as economics, medicine, and public health. These data often exhibit more zeros than standard count models like Poisson or negative binomial can arise. Zero-inflated models address this issue by modeling the data as a two-component mixture: a degenerate distribution with mass one at zero and a non-degenerate count distribution. Thus, a binomial component determines whether a count arises from structural zeros or from the count process. To analyze such data, \cite{Lambert92} introduced the zero-inflated Poisson (ZIP) model. However, ZIP regression can be too restrictive when the relationship between the response and some predictors is nonlinear. To overcome this, \cite{Lam06} proposed the partially linear ZIP model (PLZIP), which incorporates a nonparametric component to relax model assumptions and better capture complex associations. Specifically, we will say throughout the paper that a random variable $y\in \mathbb{N}_0$ follows the PLZIP model if 
\begin{equation}
y \sim
    \begin{cases} 
      0 & \textup{with probability } \pi \\
      \mathcal{P}(\lambda) & \textup{with probability } 1-\pi
   \end{cases}
\end{equation}
where $0\leq \pi \leq 1$ and $\mathcal{P}(\lambda)$ is a Poisson distribution with parameter $\lambda>0$. 
The mixing probability $\pi$ is modeled via a logistic link as $\logit(\pi)= \log\left( \frac{\pi}{1-\pi}\right)=\z^\top\b{\gamma}$ and the Poisson mean is modeled as $\log(\lambda)=\x^\top\b{\beta}+m(t)$, where $\z\in \mathbb{R}^q$, $\x\in \mathbb{R}^p$ and $t \in \mathbb{R}$ are predictor variables, while $\b{\beta}\in \mathbb{R}^p$ and $\b{\gamma} \in \mathbb{R}^q$ are the regression parameters, and $m$ is a continuous function. So, the parameter space for the PLZIP model is defined as
\begin{align}
    \Theta = \lbrace \b{\theta} = (\b{\beta},\b{\gamma},m) \in \mathbb{R}^p \times \mathbb{R}^q \times \mathcal{C} \rbrace,
\end{align}
where $\mathcal{C}$ is the space of continuous functions. 

The semiparametric structure of the PLZIP model requires both classical estimation methods for the parameters, such as maximum likelihood, and smoothing techniques for the nonparametric part. In this context, \cite{Lam06} proposed sieve maximum likelihood estimators using piecewise linear approximations. Later extensions include the doubly semiparametric ZIP model by \cite{He10}, and the single-index ZIP model estimated via profile likelihood with B-splines by \cite{Wang15}. However, all these methods rely on likelihood-based estimation, which is known to be highly sensitive to outliers or slight deviations from model assumptions.

Classical M-estimators, introduced by \cite{Huber64}, provide robust alternatives to maximum likelihood by replacing the score function with a robust estimating function that downweights observations with large residuals. However, in mixture models (particularly with well-separated components) many observations may naturally have large residuals relative to the overall mean, leading to excessive downweighting and biased estimates. To address this limitation, \cite{Hall10} proposed the robust expectation solution (RES) algorithm, which uses the Expectation-Maximization (EM) algorithm to identify the component membership of each observation and reduce the influence of extreme observations within each component.

In partially linear models, extreme values can also affect the shape and scale of the estimate of the non-parametric component. In this context, \cite{Boente06} analyzed the sensitivity of classical estimators in generalized partially linear models and proposed robust alternatives based on likelihood profiles. Later, \cite{Boente10} 
introduced a computationally efficient three-step robust procedure that estimates the linear and nonparametric components simultaneously, solving the dependency issue in profile likelihoods. They also established consistency, asymptotic normality, and developed a robust test for the parametric component.

The main goal of this article is to introduce a new robust estimation procedure for the PLZIP regression model. To achieve this, we propose an EM-like algorithm combined with the three-step procedure introduced by \cite{Boente10}, which enables the sequential estimation of linear and nonparametric components. Robust loss functions are incorporated into the M-step of the EM framework, instead of traditional log-likelihood functions, reducing the contribution of extreme observations in terms of the component they belong to. Taking advantage of the mixed nature of the model, this approach addresses both the potential complex relationships between variables and sensitivity to outliers. The proposal and implementation of a robust method specifically developed for the PLZIP model becomes a useful tool for analyzing zero-inflated count data with complex predictor relationships.

The paper is organized as follows: In Section \ref{proposal} we introduce the robust estimator for the PLZIP model. In Section \ref{algorithm} a robust estimation EM-like algorithm is introduced to obtain the proposed estimators and a result on the convergence of the algorithm is given. The consistency of the obtained estimators is proved in Section \ref{asymptotics}. In Section \ref{simulations} we present simulation studies to compare the performance of the estimators under different contamination schemes. Finally, the application of the methodology is illustrated through an example using healthcare data in Section \ref{real_data}.

\section{Robust estimators for the PLZIP model}\label{proposal}

Let $\lbrace (y_i,\x_i^\top,\z_i^\top,t_i)\rbrace_{i=1}^n \in \mathbb{N}_0\times \mathbb{R}^p\times\mathbb{R}^q\times\mathbb{R}$ a random sample that follows the PLZIP model. Defining $\mathbf{y}=(y_1, \ldots, y_n)$ the response vector, the log-likelihood function for the PLZIP model is given by 
\begin{align}\label{PLZIP_empirical_log-likelihood}
l_n(\mathbf{y},\b{\theta})&= \sum_{y_i=0}\log\left(e^{\z_i^\top\b{\gamma}}+\exp{(-e^{\x_i^\top\b{\beta} + m(t_i)}})\right)-\sum_{i=1}^n\log(1+e^{\z_i^\top\b{\gamma}})\\
\nonumber
&+ \sum_{y_i>0}\left({y_i(\x_i^\top\b{\beta}+ m(t_i)) -e^{\x_i^\top\b{\beta}+m(t_i)} }\right)-\sum_{y_i>0}\log\left({y_i!}\right).
\end{align}

Let us suppose that we can observe an auxiliary (also known as latent or missing) variable $w$ that indicates whether the response $y$ arises from the degenerate distribution, i.e. it is a structural zero ($w = 1$), or from the Poisson distribution ($w = 0$). 
Then, the complete-data log-likelihood function for the complete data $(\mathbf{y},\mathbf{w})$ where $\mathbf{w}=(w_1, \ldots, w_n)$, results 
\begin{align}\label{complete_log_likelihood}
l_n^c(\mathbf{y},\mathbf{w},\b{\theta})&=\sum_{i=1}^n(1-w_i)\left(y_i\left(\x_i^\top\b{\beta} + m(t_i)\right) - e^{\x_i^\top\b{\beta + m(t_i)}} - \log\left({y_i!}\right)\right)\\
&+ 
\sum_{i=1}^n\left(w_i\z_i^\top\b{\gamma}-\log\left(1+e^{\z_i^\top\b{\gamma}}\right)\right).
\end{align}
Note that the first summation depends only on $\b{\beta}$ and $m$, while the second summation depends only on $\b{\gamma}$. This separability allows the two terms to be optimized independently, simplifying the estimation procedure. 

The fact that the first term in \eqref{complete_log_likelihood} involves both a linear component $\b{\beta}$ and a nonparametric component $m$, makes it necessary to combine parametric and nonparametric estimation techniques. To address this, we adopt the estimation procedure proposed by \cite{Boente10}, which allows these components to be estimated sequentially. Specifically, we proceed through the following three steps:

\underline{Step 1}: For each fixed value of the variable $t$, let say $t=\tau$, denote $\eta = m(\tau)$ and compute
    \begin{equation}
        (\Tilde{\b{\beta}}_n(\tau),\Tilde{m}_n(\tau)) = \argmin_{\b{\beta},\
        \eta } \sum_{i=1}^n W_{i,n}(\tau) (1-w_i)\left(e^{\x_i^\top\b{\beta} + \eta} -y_i\left(\x_i^\top\b{\beta} + \eta \right) + \log\left({y_i!}\right)\right),
    \end{equation}
where $W_{i,n}(\tau)$ is the classical Nadaraya-Watson kernel type weights
    \begin{equation}\label{weights}
        W_{i,n}(\tau)=\frac{K\left(\frac{\tau-t_i}{h_n}\right)}{\sum_{j=1}^n  K\left(\frac{\tau-t_j}{h_n}\right)},
    \end{equation}
with $K$ a kernel function usually symmetric and positive, and $h_n>0$ a smoothing parameter selected to match the smoothness of $m$. Following the weighted likelihood estimation approach proposed by \cite{Staniswalis89} and \cite{Severini94}, kernel weights are incorporated into the estimation procedure to exploit the assumed smoothness of the function $m$. Specifically, the assumption that $m$ is smooth implies that if $t_j$ is close to $\tau$, then hen the corresponding values $y_j$ and $\x_j$ provide relevant information about $m(\tau)$. 
    
\underline{Step 2}: Once an initial estimate of the nonparametric component has been obtained, it is then used to get an estimate (independent of $\tau$) of the linear parameter, as follows
    \begin{equation}
        \widehat{\b{\beta}}_n = \argmin_{\b{\beta}} \sum_{i=1}^n(1-w_i)\left(e^{\x_i^\top\b{\beta} + \Tilde{m}_n(t_i)} - y_i\left(\x_i^\top\b{\beta} + \Tilde{m}_n(t_i)\right) + \log\left({y_i!}\right)\right)
    \end{equation}
and an estimator of $\b{\gamma}$ is found as
    \begin{equation}\label{gamma_hat}
        \widehat{\b{\gamma}}_n = \argmin_\b{\gamma} \sum_{i=1}^n\left(\log\left(1+e^{\z_i^\top\b{\gamma}}\right) - w_i^{(r)}\z_i^\top\b{\gamma} \right).
    \end{equation}
    
\underline{Step 3}: The estimate of $\b{\beta}$ obtained in Step 2 is then used to improve the nonparametric estimation of $m$
    \begin{equation}
        \widehat{m}_n(\tau) = \argmin_{\eta} \sum_{i=1}^n(1-w_i)\left(e^{\x_i^\top\widehat{\b{\beta}} + \eta} - y_i\left(\x_i^\top\widehat{\b{\beta}} + \eta \right) + \log\left({y_i!}\right)\right),
    \end{equation}
where, again, this estimation is carried out for each fixed value $t = \tau$.

\

The procedure described above relies on the assumption that the auxiliary variable $w_i$ is observable, however, it is not possible to determine whether a zero is structural or comes from the Poisson component. In this context we use the main idea of the EM algorithm, which proceeds by computing the conditional expectation of \eqref{complete_log_likelihood} given the response and then maximizing it. Furthermore, because the complete-data log-likelihood is linear in $w_i$, this step is equivalent to directly replacing $w_i$ by its conditional expectation $E(w_i | y_i)$ in \eqref{complete_log_likelihood}. In that direction, by Bayes' rule, it is easy to show that (see \cite{Lambert92}, page 4) 
\begin{align}\label{conditional_expectation}
E_{\b{\theta}}(w_i|y_i)&=\begin{cases}
\dfrac{1}{1+\exp\{-\z_i^\top\b{\gamma}-e^{\x_i^\top\b{\beta}+m(t_i)}\}}&\text{ if  }y_i=0\\
0&\text{ if  }y_i>0
\end{cases}.
\end{align}
\

To obtain robust estimators, we propose replacing the estimation functions used in the three steps with robust alternatives. In that direction, consider $\rho: \mathbb{N}_0 \times \mathbb{R} \rightarrow \mathbb{R}$ a loss function (on which conditions are imposed in Section \ref{asymptotics}), $\omega_1: \mathbb{R}^p \rightarrow \mathbb{R}$ and $\omega_2: \mathbb{R}^q \rightarrow \mathbb{R}$ penalty functions controlling the high leverage $\x$ and $\z$ predictors, respectively. Define the following functionals
\begin{align}
    \label{Q_1}
    Q_{1}(\eta,\b{\beta},\tau|\b{\theta}) &= E_0\left[ (1-E_{\b{\theta}}(w|y))\rho(y, \x^\top\b{\beta} + \eta)\omega_1(\x)|t=\tau \right], \\
    \label{Q_2}
    Q_{2}(\b{\beta},m|\b{\theta}) &= E_0 \left[ (1-E_{\b{\theta}}(w|y))\rho(y, \x^\top\b{\beta} + m(t))\omega_1(\x) \right], \\
    \label{Q_3}
    Q_{3}(\b{\gamma}|\b{\theta}) &= E_0 \left[ \left(\log\left(1+e^{\z^\top\b{\gamma}}\right)-E_{\b{\theta}}(w|y) \z^\top\b{\gamma}\right)\omega_2(\z) \right],
\end{align}
where $E_0$ denote the expectation under the true model, this is, when the true parameter value is $\b{\theta}_0 = (\b{\beta}_0,\b{\gamma}_0,m_0) \in \Theta$. The last argument $\b{\theta}$ emphasizes that the conditional expectation $E_{\b{\theta}}(w|y)$ is taken under the parameter $\b{\theta}$.

Throughout this article we assume that $\rho$, $\omega_1$ and $\omega_2$ are such that 
\begin{align}
    (\Tilde{\b{\beta}}_0(\tau), m_0(\tau)) &= \argmin_{\b{\beta},\eta} Q_{1}(\eta,\b{\beta},\tau|\b{\theta}),\\
    \b{\beta}_0 &= \argmin_{\b{\beta}} Q_{2}(\b{\beta},m_0|\b{\theta}), \\
    \b{\gamma}_0 &= \argmin_{\b{\gamma}} Q_{3}(\b{\gamma}|\b{\theta}),
\end{align} 
and $\rho$ is selected to ensure the uniqueness of $\b{\beta}_0$ and $m_0(\cdot)$.

As motivated by the discussion above regarding the smoothness of $m$ and the use of kernel weights, we construct the empirical analogue of \eqref{Q_1} by means of a robust version of the weighted likelihood approach proposed by \cite{Staniswalis89} and \cite{Severini94}. Specifically,
\begin{equation}\label{Q_n1}
    Q_{n,1}(\eta,\b{\beta},\tau|\b{\theta}) = \sum_{i=1}^n W_{i,n}(\tau)(1-E_{\b{\theta}}(w_i|y_i))\rho(y_i, \x_i^\top\b{\beta} + \eta)\omega_1(\x_i).
\end{equation}
To obtain the empirical analogues of \eqref{Q_2} and \eqref{Q_3}, we replace the expectation operator $E_0$ by integration with respect to the empirical distribution of the observed data, which yields the following expressions
\begin{align}
    \label{Q_n2}
    Q_{n,2}(m,\b{\beta}|\b{\theta}) &= \frac{1}{n}\sum_{i=1}^n (1-E_{\b{\theta}}(w_i|y_i))\rho(y_i, \x_i^\top\b{\beta} + m(t_i))\omega_1(\x_i),\\
    \label{Q_n3}
    Q_{n,3}(\b{\gamma}|\b{\theta}) &= \frac{1}{n} \sum_{i=1}^n\left(\log\left(1+e^{\z_i^\top\b{\gamma}}\right)-E_{\b{\theta}}(w_i|y_i)\z_i^\top\b{\gamma}\right)\omega_2(\z_i).
\end{align}

\begin{defi}\label{estimator_definition}
    Given the random sample $\lbrace(y_i,\x_i^\top,\z_i^\top,t_i)\rbrace_{i=1}^n \in \mathbb{N}_0\times \mathbb{R}^p\times\mathbb{R}^q\times\mathbb{R}$, which follows the PLZIP model with parameter $\b{\theta}_0=(\b{\beta}_0,\b{\gamma}_0,m_0) \in \Theta$,
    the \textit{robust three-step estimator} $\widehat{\b{\theta}}_n=(\widehat{\b{\beta}}_n,\widehat{\b{\gamma}}_n,\widehat{m}_n)$ for $\b{\theta}_0$ is given by the following steps: 
\begin{itemize}
    \item[] \underline{Step 1}:
    \begin{align}
        (\Tilde{\b{\beta}}_n(\tau),\Tilde{m}_n(\tau)) = \argmin_{\b{\beta},
        \eta } Q_{n,1}(\eta,\b{\beta},\tau| \b{\theta}).
    \end{align}
    \item[] \underline{Step 2}:
    \begin{align}
        \widehat{\b{\beta}}_n &= \argmin_{\b{\beta}} Q_{n,2}(\Tilde{m}_n,\b{\beta}|\b{\theta}), \\
        \widehat{\b{\gamma}}_n &= \argmin_{\b{\gamma}} Q_{n,3}(\b{\gamma}| \b{\theta}).
    \end{align}
    \item[] \underline{Step 3}:
    \begin{align}
        \widehat{m}_n(\tau) = \argmin_{\eta} Q_{n,1}(\eta,\widehat{\b{\beta}}_n,\tau|\b{\theta}).
    \end{align}
\end{itemize}
\end{defi}

The estimators also can be obtained by differentiating the functions $Q_{n,1}$, $Q_{n,2}$ and $Q_{n,3}$ and equating them to zero. More specifically, let 
\begin{align}\label{S_n}
    S_n(\b{\theta})= (S_{n,1}(\eta,\b{\beta},\tau| \b{\theta})^\top,S_{n,2}(m,\b{\beta}| \b{\theta})^\top,S_{n,3}(\b{\gamma}| \b{\theta})^\top)^\top
\end{align}
with
\begin{align}
    \nonumber
    S_{n,1}(\eta,\b{\beta},\tau| \b{\theta})
    &= \sum_{i=1}^n W_{i,n}(\tau)(1-E_{\b{\theta}}(w_i|y_i))\Psi(y_i, \x_i^\top\b{\beta} + \eta)\omega_1(\x_i)\b{a}_i,\\
     S_{n,2}(m,\b{\beta}| \b{\theta}) &= \frac{1}{n}\sum_{i=1}^n(1-E_{\b{\theta}}(w_i|y_i))\Psi(y_i, \x_i^\top\b{\beta} + m(t_i))\omega_1(\x_i) \x_i^\top, \\
     \nonumber
     S_{n,3}(\b{\gamma}| \b{\theta}) &= \frac{1}{n} \sum_{i=1}^n\left(\left(1+e^{-\z_i^\top\b{\gamma}}\right)^{-1} -E_{\b{\theta}}(w_i|y_i)  \right) \z_i^\top  \omega_2(\z_i),
\end{align}
where $\Psi(y,u) = \frac{\partial}{\partial u}\rho(y,u)$ and $\b{a}_i=(1,\x_i)^\top$. 
Then, the \textit{empirical estimation equations system} giving rise to the estimators is
\begin{align}\label{equations}
    S_n(\b{\theta}) =\b{0}.
\end{align}

Under the true model, e.i. when the parameter value is $\b{\theta}_0$, \eqref{S_n} is given by 
\begin{align}
    S_0(\b{\theta})= (S_{1}(\b{\beta}, \eta,\tau|\b{\theta})^\top, S_{2}(\b{\beta},m|\b{\theta})^\top, S_{3}(\b{\gamma}|\b{\theta})^\top ),
\end{align} 
with
\begin{align}
    S_{1}(\b{\beta}, \eta,\tau|\b{\theta})
    &= E_0[(1-E_{\b{\theta}}(w|y))\Psi(y, \x^\top\b{\beta} + \eta)\omega_1(\x)\b{a}|t=\tau],\\
     S_{2}(\b{\beta},m|\b{\theta}) &= E_0[(1-E_{\b{\theta}}(w|y))\Psi(y, \x^\top\b{\beta} + m(t))\omega_1(\x) \x^\top], \\
    S_{3}(\b{\gamma}|\b{\theta}) &=E_0\left[ \left(\left(1+e^{-\z^\top\b{\gamma}}\right)^{-1} -E_{\b{\theta}}(w|y)  \right) \z^\top  \omega_2(\z)\right] ,
\end{align}
where $\b{a} = (1,\x^\top)^\top$. Then, the theoretical system of estimation equations results
\begin{align}\label{theorical_equations}
    S_0(\b{\theta}) =\b{0}.
\end{align} 

Note that if we choose the function $\rho$ is derivable with derivative $\Psi$ satisfying $E_{\b{\theta}}(\Psi(y_i, \x^\top\b{\beta} + m(t))| \x,t)=0$, it is easy to show that $\b{\theta}_0$ verifies \eqref{theorical_equations}. Condition $E_{\b{\theta}}(\Psi(y_i, \x^\top\b{\beta} + m(t))| \x,t)=0$ is known as \textit{conditional Fisher-consistency}. To satisfy this condition, the loss functions $\rho$ considered in this work are defined as follows
\begin{align}\label{rho}
    \rho(y,u) = \phi(e^u -y(u +1 - \ln{y})) + G(e^u)
\end{align}
where $\phi(x)$ is a bounded non-decreasing function for $x>0$, with continuous derivative $\varphi$, and $G(e^u)$ is a correction term that guarantees the Fisher consistency condition. In this case, it is given by
\begin{align}\label{G}
    G(s) &= -\int\varphi (s)e^{-s} ds \\
    &+ \sum_{j=1}^\infty \int \varphi (s -j(ln(s) +1 - \ln{j}))\frac{\left(s^j e^{-s}\right)}{j!}\frac{(j-s)}{s}ds.
\end{align}

\begin{Remark}\label{r_gamma}
    A robust loss function is not needed in \eqref{Q_n3} since, as discussed by \cite{Hall10}, reducing the influence of high-leverage covariate observations by means of the weight function $\omega_2$ and ensuring an accurate estimation of $w_i$ through its conditional expectation (obtained via a robust estimate of $\b{\beta}$) is sufficient to preserve the robustness of the estimator $\b{\gamma}$.
\end{Remark}

\section{The robust EM-like algorithm}\label{algorithm}

To compute the estimator proposed in Definition \ref{estimator_definition}, we introduce an EM-like algorithm that combines the three-step procedure described above with the EM framework. It consists of the following steps:
\begin{itemize}
    \item[] (Initialization) Get an initial value $\widehat{\b{\theta}}_n^{(0)}=(\widehat{\b{\beta}}_n^{(0)}, \widehat{\b{\gamma}}_n^{(0)}, \widehat{m}_n^{(0)})$.
    \item[] ($E$ step) In the $r$-th iteration estimate $w_i$ by $w_i^{(r)} = E_{\widehat{\b{\theta}}_n^{(r)}}(w_i|y_i)$ where the expectation is taken under the current parameter estimator $\widehat{\b{\theta}}_n^{(r)}$.
    \item[] ($M$ step) Carry out the three-step procedure:
    \begin{itemize}
        \item[] \underline{Step 1}: For each fixed value $t=\tau$ denote $\eta = m(\tau)$ and calculate 
            \begin{equation}\label{est_tilde}
            (\Tilde{\b{\beta}}_n(\tau),\Tilde{m}_n(\tau)) = \argmin_{\b{\beta},
            \eta } Q_{n,1}(\eta,\b{\beta},\tau|\widehat{\b{\theta}}_n^{(r)}).
        \end{equation}
        \item[] \underline{Step 2}: Compute $\b{\beta}$ estimator using $\Tilde{m}_n$ from Step 1
        \begin{equation}\label{beta_hat}
            \widehat{\b{\beta}}_n^{(r+1)} = \argmin_{\b{\beta}} Q_{n,2}(\Tilde{m}_n,\b{\beta}|\widehat{\b{\theta}}_n^{(r)})
        \end{equation}
        and find $\b{\gamma}$ estimate as 
        \begin{equation}\label{gamma_hat}
            \widehat{\b{\gamma}}_n^{(r+1)} = \argmin_\b{\gamma} Q_{n,3}(\b{\gamma}|\widehat{\b{\theta}}_n^{(r)}).
        \end{equation}
        \item[] \underline{Step 3}: Using $\widehat{\b{\beta}}_n^{(r+1)}$ from Step 2, update $m$ estimator as
        \begin{equation}
            \widehat{m}_n(\tau) = \argmin_{\eta} Q_{n,1}(\eta,\widehat{\b{\beta}}_n^{(r+1)},\tau|\widehat{\b{\theta}}_n^{(r)}).
        \end{equation}
    \end{itemize}
\end{itemize}

\begin{Remark}
Adapting the profile likelihood approach used by \cite{Boente06}, for each fixed value of $\b{\beta} \in \mathbb{R}^p$ we should calculate $\Tilde{m}_{\b{\beta},n}(\tau) = \argmin_{\eta} Q_{n,1}(\eta, \b{\beta}, \tau |\widehat{\b{\theta}}_n)$. Consequently, it would be necessary to compute $\Tilde{m}_{\b{\beta},n}$ at each data point $t_i$ over a grid of possible values of $\b{\beta}$, in order to calculate the objective function $Q_{n,2}(\b{\beta}, \Tilde{m}_{\b{\beta},n}| \widehat{\b{\theta}}_n)$, which results in a considerable computational cost. In contrast, the simultaneous estimation of $\b{\beta}$ and $\eta$ in Step 1 avoids nested optimization, since $\b{\beta}$ and $\eta$ are estimated jointly for each data point $t_i$, without the need to do so for a grid of $\b{\beta}$ values, reducing the computational cost.
\end{Remark}

\subsection{Algorithm convergence}\label{algorithm_convergence}

In \cite{Hall10} the authors proposed robust estimating equations to obtain the ZIP model estimators by the expectation-solution (ES) algorithm. This is similar to the EM but an estimating equation is solved instead of maximizing the expected log-likelihood. When the derivatives of the log-likelihood function give the equations of the ES algorithm, the ES and EM algorithms are equivalent. \cite{Rosen00} showed under certain regularity conditions that if the ES algorithm converges, it converges to a solution of an unbiased estimating function. In this article, a similar result is proved for the estimation equations defined in \eqref{equations}.

\begin{theo}\label{theorem_algorithm}
    If there exists a point $\b{\theta}^*$ such that the ES algorithm converges to, this is $\lim_{r \rightarrow \infty} \widehat{\b{\theta}}_n^{(r)} = \b{\theta}^*$, and $\rho$ is such that its derivative $\Psi$ is continuous and verifies $E_{\b{\theta}}(\Psi(y, \x^\top\b{\beta} + m(t))|\x,t)=0$, then $S_n(\b{\theta}^*)=0$. Furthermore, $S_n$ is an unbiased estimation equation, this is $E_{\b{\theta}}(S_n(\b{\theta}))=0$ for each $\b{\theta} \in \Theta$.
\end{theo}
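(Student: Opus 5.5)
The plan is to treat the two claims separately. Both rest on the continuity of $\Psi$ and of $\b{\theta}\mapsto E_{\b{\theta}}(w_i|y_i)$, and on the conditional Fisher-consistency assumption.

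For the fixed-point claim, note that each M step defines $\widehat{\b{\theta}}_n^{(r+1)}$ as a minimizer of $Q_{n,1}$, $Q_{n,2}$ and $Q_{n,3}$, with weights evaluated at $\widehat{\b{\theta}}_n^{(r)}$. Differentiating and setting the derivatives to zero gives the first-order conditions
\begin{align}
S_{n,1}(\widehat{m}^{(r+1)}_n(\tau),\widehat{\b{\beta}}^{(r+1)}_n,\tau|\widehat{\b{\theta}}_n^{(r)})&=\b{0},\\
S_{n,2}(\widehat{m}^{(r+1)}_n,\widehat{\b{\beta}}^{(r+1)}_n|\widehat{\b{\theta}}_n^{(r)})&=\b{0},\\
S_{n,3}(\widehat{\b{\gamma}}^{(r+1)}_n|\widehat{\b{\theta}}_n^{(r)})&=\b{0}.
\end{align}
I will view the M step as solving the equation $S_n(\b{\theta}'|\b{\theta})=0$ in the first argument $\b{\theta}'$, given the current $\b{\theta}$. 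This is the ES formulation of \cite{Rosen00} and \cite{Hall10}. The map $(\b{\theta}',\b{\theta})\mapsto S_n(\b{\theta}'|\b{\theta})$ is jointly continuous for the finite sample, for three reasons:
\begin{itemize}
\item $E_{\b{\theta}}(w_i|y_i)$ is a continuous (logistic) function of $\z_i^\top\b{\gamma}$ and $e^{\x_i^\top\b{\beta}+m(t_i)}$;
\item $\Psi$ is continuous;
\item the remaining factors are fixed numbers.
\end{itemize}
Continuity in $m$ is needed only at finitely many points $t_i$ and $\tau$, and is taken in the topology in which the iterates converge (pointwise or sup-norm). Letting $r\to\infty$, both $\widehat{\b{\theta}}_n^{(r)}$ and $\widehat{\b{\theta}}_n^{(r+1)}$ tend to $\b{\theta}^*$. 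Hence $0=S_n(\widehat{\b{\theta}}_n^{(r+1)}|\widehat{\b{\theta}}_n^{(r)})\to S_n(\b{\theta}^*|\b{\theta}^*)=S_n(\b{\theta}^*)$.

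The main obstacle is the step-1/step-3 mismatch. Step 1 yields $\Tilde{m}_n$, while Step 3 yields $\widehat{m}_n$ from the same $S_{n,1}$ with $\b{\beta}$ fixed. Also, $S_{n,2}$ is evaluated at $\Tilde{m}_n$, not $\widehat{m}_n$. I will handle this by assuming the fixed point is reached by all three sequences, so that $\Tilde{m}_n^{(r)}$ and $\widehat{m}_n^{(r)}$ share the limit $m^*$. Alternatively, I will restrict the $\eta$-component of $S_{n,1}$ to the Step-3 equation, which holds exactly for $\widehat{m}_n$.

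For unbiasedness, fix $\b{\theta}$ and compute expectations under $\b{\theta}$ by conditioning on the covariates and then on $y$. For $S_{n,3}$, note that $E_{\b{\theta}}[E_{\b{\theta}}(w_i|y_i)\mid\z_i,\x_i,t_i]=E_{\b{\theta}}(w_i|\z_i)=(1+e^{-\z_i^\top\b{\gamma}})^{-1}$ by the tower property. Hence each summand has zero conditional mean. For $S_{n,1}$ and $S_{n,2}$ the key identity is
\[
(1-E_{\b{\theta}}(w|y))\Psi(y,u)=E_{\b{\theta}}\big[(1-w)\Psi(y,u)\mid y\big],
\]
so the conditional expectation equals $E_{\b{\theta}}[(1-w)\Psi(y,u)\mid\x,t]=P_{\b{\theta}}(w=0|\z)\,E_{\b{\theta}}[\Psi(y^*,u)|\x,t]$. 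Here $y^*\sim\mathcal{P}(e^{u})$ with $u=\x^\top\b{\beta}+m(t)$, and the last expectation is zero by conditional Fisher-consistency, which is how the correction $G$ was built. The kernel weights $W_{i,n}(\tau)$ depend only on the $t_j$, so conditioning on all $t_j$ and covariates preserves the zero mean. Taking outer expectations gives $E_{\b{\theta}}(S_n(\b{\theta}))=0$.

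The subtle point to flag is reading the Fisher-consistency hypothesis as $E[\Psi(y^*,u)]=0$ for the Poisson component, rather than for the ZIP mixture. The weight $1-E_{\b{\theta}}(w|y)$ is exactly what isolates that component. Only integrability of $\Psi$ (e.g. boundedness from $\phi$) needs a brief mention.
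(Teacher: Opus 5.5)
Your skeleton is the paper's. The paper simply invokes the Proposition in the appendix of \cite{Rosen00}, whose argument consists of your two ingredients: passing to the limit in the M-step equations by joint continuity, and unbiasedness by the tower property. Your remark that conditional Fisher-consistency must be read for the Poisson component, not the ZIP mixture, is correct and sharper than the paper.

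\textbf{The gap: unbiasedness of $S_{n,1}$.} Every summand of $S_{n,1}$ contains $\Psi(y_i,\x_i^\top\b{\beta}+\eta)$ with the common value $\eta=m(\tau)$. Under $\b{\theta}$, however, the Poisson component of $y_i$ has log-mean $\x_i^\top\b{\beta}+m(t_i)$. Fisher-consistency only gives $E_{y\sim\mathcal{P}(\lambda_i)}\Psi(y,\log\lambda_i)=0$ with $\lambda_i=e^{\x_i^\top\b{\beta}+m(t_i)}$. So for $t_i\neq\tau$ the $i$-th summand has conditional mean $(1-\pi_i)\,\omega_1(\x_i)\,\b{a}_i\,E_{y\sim\mathcal{P}(\lambda_i)}\Psi\bigl(y,\x_i^\top\b{\beta}+m(\tau)\bigr)$, which is nonzero unless $m(t_i)=m(\tau)$. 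Conditioning on the $t_j$ handles the weights $W_{i,n}(\tau)$ but not this mismatch. Your sentence that conditioning ``preserves the zero mean'' is therefore exactly where the proof fails. What you can prove is:
\begin{itemize}
\item exact unbiasedness of $S_{n,2}$, $S_{n,3}$ and of the population $S_1$, which conditions on $t=\tau$;
\item only asymptotic unbiasedness of $S_{n,1}$, as $n\to\infty$ and $h_n\to 0$, with a smoothing bias governed by the oscillation of $m$ over the kernel window.
\end{itemize}
The paper asserts condition 1) for $S_{n,1}$ without argument, so the defect lies in the statement itself. You should weaken the claim rather than try to prove it as written.

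\textbf{The Step~1/Step~3 mismatch.} Drop your first remedy. Assuming $\tilde{m}_n^{(r)}$ and $\widehat{m}_n^{(r)}$ share the limit $m^*$ is generically false. $\tilde{m}_n(\tau)$ is the $\eta$-part of the joint local fit at the local slope $\tilde{\b{\beta}}_n(\tau)$, which varies with $\tau$. $\widehat{m}_n(\tau)$ instead profiles $\eta$ at the single global $\widehat{\b{\beta}}_n$, and the two generally differ.

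Your second remedy, keeping only the $\eta$-component at $(\widehat m_n,\widehat{\b{\beta}}_n)$, is the right reading of Step~3. It still leaves two things unhandled: $S_{n,2}$, which is evaluated at $\tilde m_n$; and the $\x$-components of $S_{n,1}$, which vanish at the Step-1 point rather than at $(m^*(\tau),\b{\beta}^*)$.

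The clean route is to carry the Step-1 outputs as auxiliary sequences with limits $(\tilde{\b{\beta}}^*(\tau),\tilde{m}^*(\tau))$. You can assume this convergence, or obtain it from uniqueness of the Step-1 minimizer at $\b{\theta}^*$ plus an argmin-continuity argument. Then read $S_n(\b{\theta}^*)=\b{0}$ as the system the estimator of Definition~\ref{estimator_definition} actually solves, all with weights $E_{\b{\theta}^*}(w_i|y_i)$:
\begin{itemize}
\item the full $S_{n,1}$ at $(\tilde{m}^*(\tau),\tilde{\b{\beta}}^*(\tau))$;
\item only the $\eta$-component of $S_{n,1}$ at $(m^*(\tau),\b{\beta}^*)$;
\item $S_{n,2}$ at $(\tilde{m}^*,\b{\beta}^*)$;
\item $S_{n,3}$ at $\b{\gamma}^*$.
\end{itemize}
Your limit passage then goes through unchanged. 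The paper does not address this point at all.
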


The proof of Theorem \ref{theorem_algorithm} follows from the Proposition in Appendix of \cite{Rosen00}, since it is easy to show that the following conditions are satisfied
\begin{enumerate}[1)]
    \item $E_{\b{\theta}}(S_{n,1}(\b{\beta},\eta,\tau|\b{\theta}))=\b{
0}$, $E_{\b{\theta}}(S_{n,2}(\b{\beta},m|\b{\theta}))=\b{
0}$ and $E_{\b{\theta}}(S_{n,3}(\b{\gamma}|\b{\theta}))=\b{
0}$.
    \item $S_n(\b{\theta})$ is a continuous function in $\Theta$.
\end{enumerate}

\begin{Remark}
Theorem \ref{algorithm_convergence} guarantees that the algorithm converges to a solution of the system $S_n(\b{\theta}) = 0$. Moreover, the robust three-step procedure estimator introduced in Definition \ref{estimator_definition} is also a solution to this system; however, uniqueness is not guaranteed. Consequently, a good initialization of the algorithm is essential in practice to reaching the global minimum and to avoid convergence to local solutions.
\end{Remark}

\section{Asymptotic properties}\label{asymptotics}

Under the regularity conditions presented in \cite{Boente10} we prove the consistency of the estimators presented in this work.
\begin{enumerate}[C1]
    \item $\rho(y,u)$ is continuously differentiable and bounded; $\omega_1$ and $\omega_2$ are bounded.
    \item The kernel function $K: \mathbb{R} \rightarrow \mathbb{R}$ is even, non-negative, bounded and continuous with a bounded derivative, which satisfies $\int K(u)du = 1$, $\int u^2K(u)du<\infty$ and $|u|K(u) \rightarrow 0$ if $|u|\rightarrow 0$.
    \item The sequence $h_n$ satisfies $h_n \rightarrow 0$ y $n h_n /\log(n) \rightarrow \infty$.
    \item The marginal density of $t$, $f_T$, is a bounded function on $\mathcal{T}$. Let $\mathcal{T}_0 \subset \mathcal{T}$ be compact set and let $\mathcal{T}_\delta \subset \mathcal{T}$ be the closure of a $\delta$-neighborhood of $\mathcal{T}_0$, there exists a positive constant $A_1(\mathcal{T}_\delta)$ such that $A_1(\mathcal{T}_\delta) < f_T(t) \ \forall \ t \in \mathcal{T}_\delta$.
    \item $Q_1(\eta,\b{\beta},\tau|\b{\theta})$ satisfies the following equicontinuity condition: given $\varepsilon>0 \ \exists \ \delta >0$ such that $|\tau_1-\tau_2|<\delta$ for $\tau_i \in \mathcal{T}_0$ and $||\b{\beta}_1 - \b{\beta}_2||<\delta$ for $\b{\beta}_i \in \mathcal{K}$, $i=1,2$. Then $\sup_{\eta \in \mathbb{R}} |Q_1(\eta,\b{\beta}_1,\tau_1|\b{\theta}) - Q_1(\eta,\b{\beta}_2,\tau_2|\b{\theta})| < \varepsilon$. $||\cdot||$ indicates $L2-$norm.
    \item $Q_1(\eta,\b{\beta},\tau|\b{\theta})$ and $m_0(t)$ are continuous.
\end{enumerate}

To introduce the notation to be used from now on, suppose that $v:\mathcal{T} \rightarrow \mathbb{R}$ is a continuous function and define $||v||_{\infty} = \sup_{t\in \mathcal{T}} |v(t)|$ and $||v||_{0,\infty} = \sup_{t\in \mathcal{T}_0} |v(t)|$ where $\mathcal{T}_0 \subset \mathcal{T}$ is a compact set. $\mathcal{N}(\varepsilon, \mathcal{F}, L^1(Q))$ denotes the covering number of a family of functions $\mathcal{F}$. Specific details about the covering number of a class can be found in \cite{Pollard84}.

\begin{lemma}\label{lemma}
    Let $\mathcal{K} \subset \mathbb{R}^p$ be a compact set, suppose that C1-C6 are satisfied and $K$ is a bounded variation function. Consider the family of functions
    \begin{equation}
        \mathcal{F} = \{ f_{\b{\beta},\eta}(y, \x|\b{\theta}) = (1-E_\b{\theta}(w|y))\rho(y, \x^\top\b{\beta} + \eta)\omega_1(\x), \ \b{\beta} \in \mathcal{K}, \ \eta \in \mathbb{R}\}.
    \end{equation}
    Suppose that exist constants $A$ and $W$ such that $\mathcal{F}$ has a covering number
    \begin{equation}
        \sup_{Q}\mathcal{N}(\varepsilon, \mathcal{F}, L^1(Q)) \leq A \varepsilon^{-W}.
    \end{equation}
    Then,
    \begin{enumerate}[(a)]
        \item $\sup_{\b{\beta} \in \mathcal{K}, \
       \eta \in \mathbb{R}} ||Q_{n,1}(\eta,\b{\beta},\cdot|\b{\theta}) - Q_1(\eta,\b{\beta},\cdot|\b{\theta})||_{0,\infty} \cs 0$
       \item $||\Tilde{m}_n - m_0||_{0,\infty} \cs 0$ and $||\Tilde{\b{\beta}}_n- \b{\beta}_0||_{0,\infty} \cs 0$.
    \end{enumerate}
\end{lemma}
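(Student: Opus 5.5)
The proof follows the kernel-based uniform consistency argument of \cite{Boente10} (Lemma A.1 and Theorem 3.1 there). The plan is to first prove the uniform law of large numbers in (a) and then deduce (b) by a standard argmin argument. Throughout, $\b{\theta}$ is fixed, so the weight $1-E_{\b{\theta}}(w|y)$ is a fixed measurable function of $(y,\z,\x,t)$ taking values in $[0,1]$. Together with C1, this makes every $f_{\b{\beta},\eta}\in\mathcal{F}$ uniformly bounded by some constant $M$.

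For (a), write $Q_{n,1}(\eta,\b{\beta},\tau|\b{\theta})=R_n(\tau,\b{\beta},\eta)/\widehat f_n(\tau)$. Here
\[
R_n(\tau,\b{\beta},\eta)=\frac{1}{nh_n}\sum_{i=1}^n K\Bigl(\frac{\tau-t_i}{h_n}\Bigr)f_{\b{\beta},\eta}(y_i,\x_i|\b{\theta}),
\]
and $\widehat f_n(\tau)=(nh_n)^{-1}\sum_i K((\tau-t_i)/h_n)$. The argument splits into three pieces.
\begin{enumerate}
\item[(i)] Show $\|\widehat f_n-f_T\|_{0,\infty}\cs0$. This is classical under C2--C4, using that $K$ has bounded variation and $nh_n/\log n\to\infty$. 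By C4, $\widehat f_n$ is then eventually bounded below by $A_1(\mathcal{T}_\delta)/2$ on $\mathcal{T}_0$.
\item[(ii)] Show $\sup_{\tau\in\mathcal{T}_0,\b{\beta}\in\mathcal{K},\eta}|R_n-E R_n|\cs0$. Since $K$ has bounded variation, the class $\{K((\tau-\cdot)/h):\tau,h\}$ has polynomial covering numbers. Products with the bounded class $\mathcal{F}$, whose covering numbers are polynomial by hypothesis, again have polynomial covering numbers. Pollard's exponential inequality for such classes (\cite{Pollard84}, Theorem II.37 style) then gives $\sup|R_n-ER_n|=o(1)$ a.s. This uses that each summand is bounded by $\|K\|_\infty M/h_n$, that the variance is $O(1/h_n)$, and that $nh_n/\log n\to\infty$.
\item[(iii)] Bias: show $E R_n(\tau,\b{\beta},\eta)=\int K(u)\,Q_1(\eta,\b{\beta},\tau-uh_n|\b{\theta})f_T(\tau-uh_n)\,du\to Q_1(\eta,\b{\beta},\tau|\b{\theta})f_T(\tau)$ uniformly. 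This follows from the equicontinuity condition C5, which is uniform in $\eta$, together with the boundedness of $Q_1$ and $f_T$ and the tails of $K$ in C2.
\end{enumerate}
Combining (i)--(iii) yields (a).

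For (b), fix $\varepsilon>0$ and a compact $\mathcal{K}$ containing a neighbourhood of the Step 1 targets. By C6 and the assumed unique minimisation of $Q_1$ at $(\Tilde{\b{\beta}}_0(\tau),m_0(\tau))$, there is a separation constant
\[
\epsilon^*=\inf_{\tau\in\mathcal{T}_0}\;\inf_{\|(\b{\beta},\eta)-(\Tilde{\b{\beta}}_0(\tau),m_0(\tau))\|\ge\varepsilon}\bigl[Q_1(\eta,\b{\beta},\tau)-Q_1(m_0(\tau),\Tilde{\b{\beta}}_0(\tau),\tau)\bigr]>0 .
\]
Compactness of $\mathcal{T}_0$ and continuity of $m_0$ make the infimum over $\tau$ positive. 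With (a) at hand, the standard argument applies: on the event $\sup|Q_{n,1}-Q_1|<\epsilon^*/2$, any minimiser of $Q_{n,1}(\cdot,\cdot,\tau)$ lies within $\varepsilon$ of $(\Tilde{\b{\beta}}_0(\tau),m_0(\tau))$ for all $\tau\in\mathcal{T}_0$ simultaneously. This gives both uniform convergences. (The statement writes $\b{\beta}_0$; the natural target is $\Tilde{\b{\beta}}_0(\tau)$, which coincides with $\b{\beta}_0$ under the Fisher-consistency assumptions.)

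The main obstacle is that $\eta$ ranges over all of $\mathbb{R}$, which is noncompact, so the minimiser must be shown not to escape to $\pm\infty$. The first approach I would try is to use boundedness and monotone behaviour of $\rho$ as $|\eta|\to\infty$: $\lim_{|\eta|\to\infty}Q_1$ exists and strictly exceeds $Q_1$ at the true value, which extends the separation constant $\epsilon^*$ to a neighbourhood of infinity. Since (a) holds uniformly in $\eta\in\mathbb{R}$, the same separation argument then confines the empirical minimisers $\Tilde{m}_n(\tau)$ to a compact set eventually a.s.
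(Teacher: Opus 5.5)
Your route is the paper's. For (a) you write $Q_{n,1}$ as a ratio of kernel averages. You handle the stochastic term with Pollard's Theorem 37, using polynomial covering numbers for the product of $\mathcal{F}$ with the bounded-variation kernel class. The bias goes through C5, and C4 bounds the denominator away from zero. For (b), your separation argument is what the paper obtains by citing Lemma A1 of \cite{Carroll97}.

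One step, however, does not hold under the stated hypotheses. You centre both the denominator and the bias at $f_T$, claiming $\|\widehat f_n-f_T\|_{0,\infty}\cs 0$ and $E R_n\to Q_1 f_T$ uniformly. Both claims need (uniform) continuity of $f_T$ near $\mathcal{T}_0$. C4 only gives boundedness and a positive lower bound, so this is not ``classical'' here. The paper avoids the issue by centring at $E_0\widehat f_n$ instead:
\[
R_n-Q_1\widehat f_n=(R_n-E_0R_n)+\bigl(E_0R_n-Q_1E_0\widehat f_n\bigr)-Q_1\bigl(\widehat f_n-E_0\widehat f_n\bigr).
\]
The middle term equals $\int K(u)\bigl[Q_1(\eta,\b{\beta},\tau-uh_n|\b{\theta})-Q_1(\eta,\b{\beta},\tau|\b{\theta})\bigr]f_T(\tau-uh_n)\,du$. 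It needs only C5, $\|f_T\|_\infty<\infty$ and the tails of $K$. The lower bound for the denominator comes from $E_0\widehat f_n(\tau)\geq A_1(\mathcal{T}_\delta)\int_{|u|\leq\delta/h_n}K(u)\,du\to A_1(\mathcal{T}_\delta)$. With that change your (a) goes through.

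For (b), you correctly spot that $\eta$ ranges over the noncompact $\mathbb{R}$; the paper's proof does not address this at all. Your remedy is that $\lim_{|\eta|\to\infty}Q_1$ exists and strictly exceeds the minimal value, uniformly in $\tau\in\mathcal{T}_0$ and $\b{\beta}\in\mathcal{K}$. This is not a consequence of C1--C6: C1 gives boundedness and smoothness of $\rho$, not monotonicity or limits. It therefore has to be added as a hypothesis, of the kind used in \cite{Boente06}.

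The same issue arises for $\b{\beta}$. Part (a) is uniform only over $\b{\beta}\in\mathcal{K}$, while Step 1 minimises over all of $\mathbb{R}^p$. You must either restrict Step 1 to $\mathcal{K}$ or impose an analogous condition as $\|\b{\beta}\|\to\infty$.

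Finally, positivity of your separation constant uniformly in $\tau$ needs continuity of $\tau\mapsto\Tilde{\b{\beta}}_0(\tau)$ as well as of $m_0$. This is automatic once you identify $\Tilde{\b{\beta}}_0(\tau)$ with $\b{\beta}_0$, as both you and the paper do.
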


\textbf{Proof of Lemma \ref{lemma}.}

The proof of (a) follows from the steps of the proof of Lemma 3.1 of \cite{Boente06}, by bounding the term $|1-E_\b{\theta}(w_i|y_i)| < 1$. See details in \ref{appendix}. (b) is easy to prove using (a), $C6$ and Lemma A1 of \cite{Carroll97}. In the notation of the Lemma A1, take $f=Q_1$ with unique minimizer $ (\b{\beta}_0, m_0(t)) \in \mathcal{K} \times \mathcal{C}$ and $f_n = Q_{n,1}$ with a minimizer $(\Tilde{\b{\beta}}_n(t), \Tilde{m}_n(t))$. 

$\hfill\blacksquare$

\begin{theo}\label{theorem_beta}     
     Suppose that $\Psi$ is bounded, conditions of Lemma \ref{lemma} are satisfied and the family of functions 
    \begin{align}
        \mathcal{H} = \{ h_{\b{\beta}}(y, \x, t|\b{\theta}) = (1-E_\b{\theta}(w|y))\rho(y, \x^\top\b{\beta} + \Tilde{m}_n(t))\omega_1(\x), \ \b{\beta} \in \mathbb{R}^{p}\}
    \end{align} 
    has a coverage number such that $\log \mathcal{N}(\varepsilon, \mathcal{H}, L^1(P_n)) = o_p(n)$, where $P_n$ is the empirical distribution of $(y,\x,t)$. Then the following conditions are satisfied,
     \begin{enumerate}[(a)]
         \item $sup_{\b{\beta} \in \mathbb{R}^{p}} |Q_{n,2}( \Tilde{m}_n,\b{\beta}|\b{\theta})-Q_{2}(m_0,\b{\beta}|\b{\theta})| \cs 0.$ 
         \item If $Q_{2}(m_0,\b{\beta}|\b{\theta})$ has a unique minimum $\b{\beta}_0$, $\x$ has a second finite moment and $lim \inf_{||\b{\beta}|| \rightarrow \infty} Q_{2}(m_0,\b{\beta}|\b{\theta}) > Q_{2}(m_0,\b{\beta}_0|\b{\theta})$, then $\widehat{\b{\beta}}_n \cs \b{\beta}_0.$
     \end{enumerate}
\end{theo}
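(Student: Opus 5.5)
The plan is to follow the standard two-part argument of \cite{Boente10} for the second step of the three-step procedure. Part (a) is a uniform law of large numbers combined with a plug-in bound. Part (b) is an argmin-consistency argument.

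For (a), I would split the difference with the triangle inequality:
\begin{align}
|Q_{n,2}(\Tilde{m}_n,\b{\beta}|\b{\theta})-Q_{2}(m_0,\b{\beta}|\b{\theta})| &\le |Q_{n,2}(\Tilde{m}_n,\b{\beta}|\b{\theta})-Q_{n,2}(m_0,\b{\beta}|\b{\theta})| \\
&+ |Q_{n,2}(m_0,\b{\beta}|\b{\theta})-Q_{2}(m_0,\b{\beta}|\b{\theta})|.
\end{align}
The first term I would bound by a mean value expansion in the second argument of $\rho$. Since $|1-E_{\b{\theta}}(w|y)|\le 1$, $\omega_1$ is bounded (C1) and $\Psi$ is bounded, this term is at most $\|\Psi\|_\infty\|\omega_1\|_\infty \frac1n\sum_i|\Tilde{m}_n(t_i)-m_0(t_i)|$.

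To control this average, I would split it according to whether $t_i\in\mathcal{T}_0$. On $\mathcal{T}_0$ the contribution is at most $\|\Tilde{m}_n-m_0\|_{0,\infty}$, which tends to $0$ a.s. by Lemma \ref{lemma}(b). Off $\mathcal{T}_0$, I would use the boundedness of $\rho$ directly instead of the mean value bound: the contribution is at most $2\|\rho\|_\infty\|\omega_1\|_\infty\frac1n\sum_i I(t_i\notin\mathcal{T}_0)$. This average converges a.s. to $P(t\notin\mathcal{T}_0)$, which can be made arbitrarily small by the choice of $\mathcal{T}_0$. (If $\mathcal{T}$ is itself compact, one takes $\mathcal{T}_0=\mathcal{T}$ and this piece vanishes.)

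The second term is a uniform strong law of large numbers over the class $\{(1-E_{\b{\theta}}(w|y))\rho(y,\x^\top\b{\beta}+m_0(t))\omega_1(\x):\ \b{\beta}\in\mathbb{R}^p\}$. This class is uniformly bounded by C1. I would obtain uniformity from the entropy condition $\log\mathcal{N}(\varepsilon,\mathcal{H},L^1(P_n))=o_p(n)$ through Theorem II.24 of \cite{Pollard84} (the Glivenko--Cantelli theorem for bounded classes). The \textbf{main obstacle} is that the entropy hypothesis is stated for the class with $\Tilde{m}_n$ plugged in, not for the class with $m_0$. I would handle this by approximating $m_0$-functions by $\Tilde{m}_n$-functions within the same $\varepsilon$ in $L^1(P_n)$, using the first-term bound above. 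Covers of one class then transfer to the other up to $2\varepsilon$, so the entropy condition carries over.

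For (b), I would use the standard argmin argument. First I would show $\widehat{\b{\beta}}_n$ eventually lies in a compact set a.s. Let $\delta=\liminf_{\|\b{\beta}\|\to\infty}Q_2(m_0,\b{\beta}|\b{\theta})-Q_2(m_0,\b{\beta}_0|\b{\theta})>0$, and choose $R$ with $\inf_{\|\b{\beta}\|>R}Q_2(m_0,\b{\beta}|\b{\theta})>Q_2(m_0,\b{\beta}_0|\b{\theta})+\delta/2$. By (a), for large $n$, $Q_{n,2}(\Tilde{m}_n,\b{\beta}|\b{\theta})>Q_{n,2}(\Tilde{m}_n,\b{\beta}_0|\b{\theta})$ for all $\|\b{\beta}\|>R$, so the minimizer satisfies $\|\widehat{\b{\beta}}_n\|\le R$. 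On the ball $\{\|\b{\beta}\|\le R\}$, $Q_2(m_0,\cdot|\b{\theta})$ is continuous by dominated convergence, using boundedness of $\Psi$ and $\omega_1$ and the second moment of $\x$. With a unique minimum at $\b{\beta}_0$, for each $\epsilon>0$ there is $c>0$ with $\inf_{\epsilon\le\|\b{\beta}-\b{\beta}_0\|\le R}Q_2(m_0,\b{\beta}|\b{\theta})\ge Q_2(m_0,\b{\beta}_0|\b{\theta})+c$. Combining this with (a) and the minimizing property of $\widehat{\b{\beta}}_n$ gives $\|\widehat{\b{\beta}}_n-\b{\beta}_0\|<\epsilon$ eventually a.s., hence $\widehat{\b{\beta}}_n\cs\b{\beta}_0$.
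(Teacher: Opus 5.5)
Your proposal is correct and matches the paper's proof of (a): the same split into a plug-in term and an empirical-process term, the same mean-value bound on $\mathcal{T}_0$ with the boundedness of $\rho$ used off $\mathcal{T}_0$, and Theorem 24 of \cite{Pollard84}; you also rightly notice that the entropy hypothesis concerns $\mathcal{H}$, built with $\Tilde{m}_n$, whereas term (B) involves the class built with $m_0$. The paper applies the hypothesis there without comment, and your transfer of covers through the uniform-in-$\b{\beta}$ $L^1(P_n)$ plug-in bound is a valid way to close that gap. For (b), you replace the paper's subsequence argument (separate cases $\|\b{\beta}^*\|<\infty$ and $\|\b{\beta}^*\|=\infty$) by first trapping $\widehat{\b{\beta}}_n$ in a fixed ball and then using a well-separated minimum on a compact annulus. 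The ingredients are the same, and your version gives almost sure convergence of the whole sequence directly, without the implicit pathwise subsequence bookkeeping.
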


Proof of Theorem \ref{theorem_beta} is given in \ref{appendix}.

\begin{Remark}
    Applying known results it is easy to see that the condition on the covering number of the family $\mathcal{F}$ in Lemma \ref{lemma} is fulfilled under general conditions when $\rho$ is defined as \eqref{rho}. Indeed, if $\phi$ and $G(e^u)$ of \eqref{rho} have bounded variation, by Lemma 22 of \cite{Nolan87} the covering number of the class $\lbrace G(e^{\x^\top\b{\beta} + \eta}),\ \b{\beta} \in \mathbb{R}^q, \ \eta \in \mathbb{R} \rbrace$ grows at a polynomial rate (is bonded by $A e^{-W}$ for some constants $A$ and $W$). On the other hand, the covering number of the class $\lbrace e^{\x^\top\b{\beta} + \eta} -y(\x^\top\b{\beta} + \eta +1 - \ln{y}), \ \b{\beta} \in \mathbb{R}^p, \ \eta \in \mathbb{R} \rbrace$ grows at a polynomial rate since the family $\lbrace \x^\top \b{\beta} + \eta, \b{\beta} \in \mathbb{R}^p, \ \eta \in \mathbb{R} \rbrace$ has finite dimension (and by Lemma 28 of \cite{Pollard84} grows at a polynomial rate). Using the Lemma 16 of \cite{Nolan87} which states that for $r\geq 1 $ $\mathcal{N}(\varepsilon,\mathcal{H}_1+\mathcal{H}_2, L^r(P)) \leq \mathcal{N}(\varepsilon,\mathcal{H}_1, L^r(P))\mathcal{N}(\varepsilon,\mathcal{H}_2, L^r(P))$ the result is derived. A similar reasoning can be applied to the condition about the covering number of the functions family $\mathcal{H}$ in Theorem \eqref{theorem_beta}. 
\end{Remark}

\begin{cor}\label{corollary_m}
       If hypothesis of Theorem \ref{theorem_beta} are verified, $||\widehat{m}_n - m_0||_{0,\infty} \cs 0$.
\end{cor}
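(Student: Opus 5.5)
The argument reruns the proof of Lemma \ref{lemma}(b), now with $\b{\beta}$ fixed at the random value $\widehat{\b{\beta}}_n$ rather than optimised jointly. Recall that $\widehat{m}_n(\tau)=\argmin_{\eta} Q_{n,1}(\eta,\widehat{\b{\beta}}_n,\tau|\b{\theta})$ and that, by the standing identifiability assumption, $m_0(\tau)$ is the unique minimiser in $\eta$ of $Q_1(\eta,\b{\beta}_0,\tau|\b{\theta})$ for each $\tau$. I would apply Lemma A1 of \cite{Carroll97} pointwise-uniformly in $\tau\in\mathcal{T}_0$, taking
\[
f_n(\eta,\tau)=Q_{n,1}(\eta,\widehat{\b{\beta}}_n,\tau|\b{\theta}),\qquad f(\eta,\tau)=Q_1(\eta,\b{\beta}_0,\tau|\b{\theta}).
\]
For this I need $\sup_{\tau\in\mathcal{T}_0}\sup_{\eta\in\mathbb{R}}|f_n-f|\cs 0$. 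Combined with the continuity in C6 and the unique minimiser $m_0(\tau)$, this gives $\|\widehat{m}_n-m_0\|_{0,\infty}\cs0$.

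The uniform convergence comes from the decomposition
\[
|Q_{n,1}(\eta,\widehat{\b{\beta}}_n,\tau|\b{\theta})-Q_1(\eta,\b{\beta}_0,\tau|\b{\theta})|
\le |Q_{n,1}(\eta,\widehat{\b{\beta}}_n,\tau|\b{\theta})-Q_1(\eta,\widehat{\b{\beta}}_n,\tau|\b{\theta})|+|Q_1(\eta,\widehat{\b{\beta}}_n,\tau|\b{\theta})-Q_1(\eta,\b{\beta}_0,\tau|\b{\theta})|.
\]
By Theorem \ref{theorem_beta}(b), $\widehat{\b{\beta}}_n\cs\b{\beta}_0$. Fix a compact neighbourhood $\mathcal{K}$ of $\b{\beta}_0$; with probability one, $\widehat{\b{\beta}}_n\in\mathcal{K}$ for all large $n$. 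The first term is then bounded by $\sup_{\b{\beta}\in\mathcal{K},\eta}\|Q_{n,1}(\eta,\b{\beta},\cdot|\b{\theta})-Q_1(\eta,\b{\beta},\cdot|\b{\theta})\|_{0,\infty}$, which tends to zero a.s. by Lemma \ref{lemma}(a). The second term tends to zero uniformly in $\eta\in\mathbb{R}$ and $\tau\in\mathcal{T}_0$ by the equicontinuity condition C5, applied with $\tau_1=\tau_2=\tau$ and $\|\widehat{\b{\beta}}_n-\b{\beta}_0\|<\delta$ eventually.

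The remaining step, which I expect to be the main obstacle, is making the argmin argument uniform in $\tau$. Lemma A1 of \cite{Carroll97} needs a well-separated minimum: for every $\varepsilon>0$,
\[
\inf_{\tau\in\mathcal{T}_0}\ \inf_{|\eta-m_0(\tau)|\ge\varepsilon}\bigl[Q_1(\eta,\b{\beta}_0,\tau|\b{\theta})-Q_1(m_0(\tau),\b{\beta}_0,\tau|\b{\theta})\bigr]>0.
\]
I would obtain this in two parts:
\begin{itemize}
\item[(i)] for $\eta$ in a compact range, from uniqueness of the minimiser, continuity of $Q_1$ and $m_0$ (C6), and compactness of $\mathcal{T}_0$;
\item[(ii)] for $|\eta|\to\infty$, from the behaviour of $Q_1$ at infinity, which is implicitly assumed exactly as in the proof of Lemma \ref{lemma}(b), since $\rho$ is bounded (C1).
\end{itemize}
Given this separation, standard arguments finish the proof. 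The sup-norm convergence forces $Q_1(\widehat{m}_n(\tau),\b{\beta}_0,\tau|\b{\theta})-Q_1(m_0(\tau),\b{\beta}_0,\tau|\b{\theta})\le 2\sup|f_n-f|\to0$ uniformly in $\tau$, and the separation then yields $\sup_{\tau\in\mathcal{T}_0}|\widehat{m}_n(\tau)-m_0(\tau)|<\varepsilon$ eventually, almost surely.
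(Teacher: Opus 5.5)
Your proposal is correct and follows the paper's route. The paper's proof is only the remark that the result follows from Lemma \ref{lemma}, Theorem \ref{theorem_beta} and Lemma A1 of \cite{Carroll97}, and your argument supplies the details that remark omits: Lemma \ref{lemma}(a) on a compact neighbourhood of $\b{\beta}_0$, plus C5 to absorb $\widehat{\b{\beta}}_n\to\b{\beta}_0$, followed by the well-separated-minimum argument. The control of $Q_1$ as $|\eta|\to\infty$ that you flag in (ii) is not among C1--C6 and is tacitly assumed by the paper as well (already in Lemma \ref{lemma}(b)), so you are right to state it as an extra condition rather than claim to derive it.
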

The proof of Corollary \ref{corollary_m} follows easily from Lemma \ref{lemma} and Theorem \ref{theorem_beta}, applying Lemma A1 of \cite{Carroll97}.

\

The following theorem states the consistency of $\widehat{\b{\gamma}}_n$ and is based on an analogous result to that of \cite{Huber67} for the consistency of ML estimator under weaker conditions than usual.

\begin{theo}\label{theorem_gamma}
    Suppose that exist $N_0$ such that $\widehat{\b{\gamma}}_n \in \mathcal{K}_1 \ \forall \ n > N_0$ with $\mathcal{K}_1$ a compact set. Suppose also that the following conditions hold
    \begin{enumerate}[(a)]
        \item Let $\mathcal{B}$ be an open neighborhood of $\b{\gamma}'$,
    \begin{align}
        E_{0} \left[ \inf_{\b{\gamma}\in \mathcal{B}}\left(\log\left(1+e^{\z^\top\b{\gamma}}\right)-E_{\b{\theta}}(w|y) \z^\top\b{\gamma}\right)\omega_2(\z) \right] 
        \rightarrow Q_{3}(\b{\gamma}'|\b{\theta})
    \end{align}
    when $\mathcal{B} \rightarrow \b{\gamma}'$.
    \item Exists $\b{\gamma}_0$ such that $Q_{3}(\b{\gamma}_0|\b{\theta})<Q_{3}(\b{\gamma}|\b{\theta})$ for all $\b{\gamma}$.
    \end{enumerate}
Then $\widehat{\b{\gamma}}_n \cs \b{\gamma}_0$.
\end{theo}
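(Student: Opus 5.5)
The plan is to follow Huber's (1967) argument for consistency of M-estimators on a compact set, applied to the i.i.d. summands
\[
g(y,\z,\b{\gamma}) = \left(\log\left(1+e^{\z^\top\b{\gamma}}\right)-E_{\b{\theta}}(w|y)\z^\top\b{\gamma}\right)\omega_2(\z),
\]
so that $Q_{n,3}(\b{\gamma}|\b{\theta}) = \frac1n\sum_i g(y_i,\z_i,\b{\gamma})$ and $Q_3(\b{\gamma}|\b{\theta}) = E_0[g(y,\z,\b{\gamma})]$. Here $\b{\theta}$ is held fixed, so the weights $E_{\b{\theta}}(w_i|y_i)$ are deterministic functions of $y_i$ and the summands are genuinely i.i.d.

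Since $\widehat{\b{\gamma}}_n \in \mathcal{K}_1$ for $n>N_0$, it suffices to show the following. For every open neighbourhood $U$ of $\b{\gamma}_0$, with probability one, eventually
\[
\inf_{\b{\gamma}\in\mathcal{K}_1\setminus U} Q_{n,3}(\b{\gamma}|\b{\theta}) > Q_{n,3}(\b{\gamma}_0|\b{\theta}).
\]
Because $\widehat{\b{\gamma}}_n$ minimizes $Q_{n,3}$, this forces $\widehat{\b{\gamma}}_n\in U$ eventually a.s. Letting $U$ shrink along a countable sequence then gives $\widehat{\b{\gamma}}_n \cs \b{\gamma}_0$.

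\textbf{Step 1 (control at $\b{\gamma}_0$).} By the strong law of large numbers, $Q_{n,3}(\b{\gamma}_0|\b{\theta}) \cs Q_3(\b{\gamma}_0|\b{\theta})$. Integrability comes from $\omega_2$ bounded (C1), $0\le E_{\b{\theta}}(w|y)\le1$, and the bound $\log(1+e^{s})\le \log 2+|s|$. This requires $E\|\z\|<\infty$, or that $\omega_2$ downweights large $\|\z\|$ so that $\omega_2(\z)\|\z\|$ is bounded; I would add this as an implicit integrability assumption if needed. Finiteness of $Q_3(\b{\gamma}_0|\b{\theta})$ is also implicit in (b).

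\textbf{Step 2 (lower bound away from $\b{\gamma}_0$).} Put $\delta = \inf_{\b{\gamma}\in\mathcal{K}_1\setminus U}Q_3(\b{\gamma}|\b{\theta}) - Q_3(\b{\gamma}_0|\b{\theta})$. For each $\b{\gamma}'\in\mathcal{K}_1\setminus U$, condition (a) yields an open neighbourhood $\mathcal{B}_{\b{\gamma}'}$ with
\[
E_0\Big[\inf_{\b{\gamma}\in\mathcal{B}_{\b{\gamma}'}} g\Big] > Q_3(\b{\gamma}'|\b{\theta}) - \delta/3 .
\]
The set $\mathcal{K}_1\setminus U$ is compact, so finitely many such balls $\mathcal{B}_1,\dots,\mathcal{B}_k$ cover it. On each ball, $\inf_{\b{\gamma}\in\mathcal{B}_j}Q_{n,3}(\b{\gamma}|\b{\theta}) \ge \frac1n\sum_i \inf_{\b{\gamma}\in\mathcal{B}_j} g(y_i,\z_i,\b{\gamma})$. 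The SLLN applied to this finite collection of infima gives, a.s. and eventually,
\[
\inf_{\mathcal{K}_1\setminus U} Q_{n,3} \ge \min_j E_0\Big[\inf_{\mathcal{B}_j} g\Big] - \delta/3 .
\]
Measurability of the infimum is fine because $g$ is continuous in $\b{\gamma}$, so the infimum can be taken over a countable dense set. The right-hand side is at least $Q_3(\b{\gamma}_0|\b{\theta})+\delta/3$. Combining this with Step 1 gives the strict inequality eventually.

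\textbf{Main obstacle.} The delicate point is ensuring $\delta>0$, since (b) only asserts pointwise strict inequality. I would obtain it from lower semicontinuity of $Q_3$, which follows from (a): $Q_3(\b{\gamma}') \ge \lim E_0[\inf_{\mathcal{B}} g]$ and $\liminf_{\b{\gamma}\to\b{\gamma}'}Q_3(\b{\gamma}) \ge$ the same limit. A lower semicontinuous function attains its infimum on the compact set $\mathcal{K}_1\setminus U$, and (b) then gives $\delta>0$. Alternatively, $Q_3$ is convex and continuous in $\b{\gamma}$ by dominated convergence, which gives the same conclusion. The remaining technicality is the SLLN for $\inf_{\mathcal{B}_j} g$, which may have expectation $-\infty$. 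Huber's argument handles this by allowing $E_0[\inf g]\in[-\infty,\infty)$. Here the bound $g\ge -\omega_2(\z)|\z^\top\b{\gamma}|$, together with the integrability above and the boundedness of $\mathcal{B}_j$, makes these expectations finite.
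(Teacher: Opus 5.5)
Your proposal is correct and follows the same route as the paper: the paper gives no separate proof and simply appeals to \cite{Huber67}, which is exactly the Wald-type finite-cover plus SLLN argument you write out, with the compactness of $\mathcal{K}_1$ taking the place of Huber's condition at infinity. One minor simplification: since $E_{\b{\theta}}(w|y)\in[0,1]$ one has $\log(1+e^{s})-E_{\b{\theta}}(w|y)\,s\ge 0$, so for $\omega_2\ge 0$ the summands are nonnegative, the concern about $-\infty$ expectations disappears, and the only integrability you actually need is $Q_{3}(\b{\gamma}_0|\b{\theta})<\infty$, which (b) already presupposes.
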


\section{Simulation Studies}\label{simulations}

In this section, we conduct a finite-sample simulation study to evaluate the robustness and performance of the estimators proposed in this article, under four different contamination schemes. For each simulated sample, the PLZIP model is estimated using the procedure described in Section \ref{algorithm}. Additionally, estimators proposed by other authors are computed to enable a comparative assessment of performance.

\subsection{Robust Loss Function}\label{loss}

The loss functions employed in this article have the general form given in \eqref{rho}, where the term $G$, which ensures Fisher consistency, is defined in \eqref{G}. Thus, the only remaining choice is the specification of the function $\phi$. In this work, we consider two different choices for $\phi$, resulting in two robust estimators for each linear parameter and the nonparametric function of the PLZIP model.

One of the $\phi$ functions implemented is that proposed by \cite{Croux02}. In their work, the authors extended the study of \cite{Bianco96} to robustly estimate logistic regression models, deriving conditions for the existence of the estimator in finite samples and proposing a corresponding algorithm for its computation. The robust function is given by
\begin{align}
    \phi_{CH}(s) = 
    \begin{cases}
        s e^{-\sqrt{c}}, \quad  |s|\leq c  \\
        e^{-\sqrt{c}}(2(1+\sqrt{c})+c) -2e^{-\sqrt{s}}(1+\sqrt{s}), \quad |s|>c
    \end{cases}
\end{align} 
where the constant $c$ is chosen following the recommendation of \cite{Croux02}, who suggest selecting it as a trade-off between robustness and efficiency: larger values of $c$ improve efficiency at the expense of robustness, and smaller values increase robustness but reduce efficiency. In this work, we set $c=0.5$ as a balanced choice. The estimators obtained using 
this robust loss function, are denoted CH, which refers to the authors Croux and Haesbroeck.

The second $\phi$ function is based on the redescending M-estimators applied to transformed responses (MT estimators) proposed by \cite{Valdora14}. In their work, the authors suggested transforming the response variable to stabilize its variance to an approximately constant value, ensuring appropriate scaling of the loss function used to define the M-estimator. They demonstrated that, in the case of the Poisson distribution, the square root transformation effectively makes the variance of the response variable nearly constant. Specifically, the robust loss function is given by
\begin{align}\label{rho_valdora}
    \rho(y,u) &= \phi_{MT}( \sqrt{y} - f( e^{u})),
\end{align}
with
\begin{align}
        \phi_{MT}(s) &= \begin{cases}
        1-\left( 1-\left(\frac{s}{c}\right)^2\right)^4, \quad |s|\leq c \\
        1, \quad |s|> c
    \end{cases} 
\end{align}
and
\begin{align}\label{f}
        f(\lambda) &=  \argmin_u E_\lambda(\phi_{MT}(\sqrt{y}-u)),
\end{align}
where the expectation is taken under a Poisson distribution with parameter $\lambda$. As the authors indicate, the constant is chosen $c = 2.9$ to have an efficiency between $75\%$ and $90\%$. The estimators obtained using this robust loss function are denoted MT. 

The weight functions $\omega_1$ and $\omega_2$ that we use are those of \cite{Valdora14}, defined as functions of the Mahalanobis distance.

\begin{Remark}
    It is worth noting that the additional term $G$ is not required in \eqref{rho_valdora}, since, due to the choice of $f$, the function naturally satisfies the Fisher consistency condition.
\end{Remark}

For comparison purposes, we also compute estimators with EM-like algorithm proposed in this work, using the standard log-likelihood function; these are denoted ML. Moreover, to enrich the comparative analysis, we implement the approach proposed by \cite{Wang15} for estimating the PLZIP model based on likelihood profiles and B-spline approximations, which we denote by BS-ML (where BS indicates the use of B-splines).
In addition, we consider estimators of the ZIP linear regression model by introducing the covariate $t$ as a linear predictor. In this context, we include the RES estimator of \cite{Hall10} as well as the classical maximum likelihood estimator for the ZIP model, denoted RES and ZIP-ML, respectively. At this point it is important to highlight that ZIP-ML and RES are estimators of a model that is not the true model of the data, which could affect their performance in simulation studies.

\subsection{Contamination Schemes}

To evaluate the robustness of the proposed estimators we consider four different contamination schemes inspired by the simulation study in \cite{Hall10}. The first one is non-contaminated, while the remaining schemes involve $10\%$ contamination either in the response variable, in the covariates, or in both.

The parameters are set to $\b{\beta} = (2, 2)^\top$ and $\b{\gamma} = (-1, 1)^\top$. The covariates are generated as follows: $x_1 = 1$ for the first $n/2$ observations and $x_1 = 0$ otherwise; $x_2 \sim U[0, 1]$; $t \sim U[-2, 2]$; $z_1 \sim U[0, 1]$; and $z_2 \sim N(0, 1)$. The nonparametric component is defined as $m(t) = \sin\left(\frac{\pi}{2}t\right)$. For this parameter setting, $N=100$ data sets of sample size $n=500$ were generated. The contamination schemes considered in this study are the following:

\underline{Scheme $C_0$:} Non-contaminated.

\underline{Scheme $C_1$:} 
$10\%$ of the response variable is contaminated by adding a large fixed value $y_0 = 70$.

\underline{Scheme $C_2$:} 
$10\%$ of the covariate $x_2$ is contaminated, now generated with distribution $U[1, 2]$. In the same observations, the response is set to $y = 0$. This leads to ``false zeros'' since, for these contaminated observations, $\x_i^\top \b{\beta}$ is large and thus the expected Poisson mean $\lambda_i = e^{\x_i^\top \b{\beta} + m(t_i)}$ would suggest large counts. However, zeros are observed instead. For this contamination scheme, it is necessary to also incorporate the weight $\omega_2(\x_i)$ in \eqref{Q_n3} to properly address the presence of false zeros.

\underline{Scheme $C_3$:} This scenario combines schemes $C_1$ and $C_2$. 
Specifically, $5\%$ of the covariate $x_2$ is contaminated with distribution $U[1, 2]$, and in the same observations, the response is set to $y = 0$. Additionally, $5\%$ of the response variable is contaminated by adding a constant value $y_0 = 70$.

\subsection{Measure of Performance}

Various performance metrics are calculated to evaluate the obtained estimators. Specifically, to quantify the error of $\widehat{m}$, the well-known root mean square error (RMSE)
\begin{align}\label{rmse}
    RMSE(\widehat{m}) &= \left(\frac{1}{n} \sum_{i=1}^n(\widehat{m}_n(t_i) - m_0(t_i))^2\right)^{1/2}
\end{align}
is computed for each sample. 

Concerning the linear parameters, the distances between the estimators obtained with each sample and the true parameter value are calculated. That is $||\widehat{\b{\beta}}_n - \b{\beta}_0||$ and $||\widehat{\b{\gamma}}_n - \b{\gamma}_0||$, where $||\cdot||$ represents the $L2$-norm. 

\subsection{Results}

In this subsection, we present the results corresponding to  each contamination scheme. The estimators proposed in this work require the specification of the smoothing parameter $h_n$, associated with the Nadaraya-Watson kernel weights used in \eqref{Q_n1}. The value of $h_n$ was computed for each estimator using 5-fold cross-validation on the first ten simulated datasets. To select $h_n$, we applied 5-fold cross-validation to the first ten simulated datasets, following the procedure described in \cite{Boente06}. Given the similarity of the optimal $h_n$ values obtained across these datasets and in order to reduce the overall computational cost, we adopted the average of the ten selected values as a fixed $h_n$ for the remaining datasets. The final values of the smoothing parameter used for each estimator were $h_{n,ML} = 0.126$, $h_{n,MT}=0.135$ and $h_{n,CH}=0.159$ for ML, MT and CH estimators, respectively.

\

\begin{figure}
\begin{center}
    \includegraphics[width=1\textwidth]{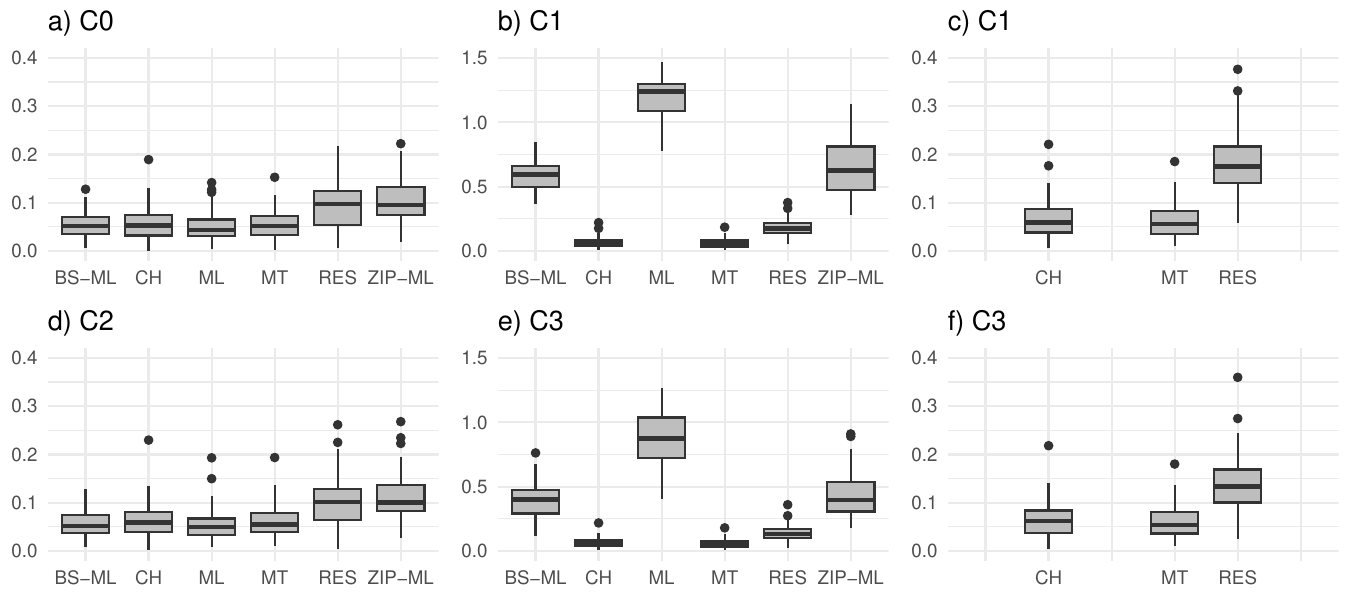}
    \caption{$L_2$-norm distance from $\b{\beta}_0$ to the estimators. The panels a), b), d) and e) represent each contamination scheme. c) and f) are zoomed views of the panels b) and e), respectively.}
    \label{betas}
\end{center}
\end{figure}

Figure \ref{betas} shows the $L_2$-norm distance between each $\b{\beta}$ estimator and the true parameter value $\b{\beta}_0$, for each method and contamination scheme. In Figure \ref{betas}a), it can be seen that under scheme $C_0$, the ML estimator proposed in this work shows the best performance, followed by the BS-ML estimator and the robust CH and MT estimators. The behavior of the robust estimators does not differ significantly from that of the ML estimators in this scenario, demonstrating their efficiency. In contrast, the RES and ZIP-ML estimators perform worse since the data were generated under a partially linear model. On the other hand, Figures \ref{betas}b) and \ref{betas}e) show that in schemes $C_1$ and $C_3$, where contamination is present in the response variable, the maximum likelihood-based estimators perform poorly, as expected. In this case the BS-ML and ZIP-ML estimators are more conservative than the ML estimator, since the latter offers more flexibility provided by the nonparametric component estimated using the weighted log-likelihood function, allowing the estimator to better adapt to the data. However, when the data contain extreme values, this flexibility becomes counterproductive. Figure \ref{betas}d) shows that scheme $C_2$ does not seem to affect the estimation of $\b{\beta}$, as the behavior of the estimators is very similar to scheme $C_0$. This is reasonable since high-leverage observations in the covariate $\mathbf{x}$ are controlled by the function $\omega_1$, and false zeros mainly affect the estimation of $\b{\gamma}$. Figures \ref{betas}c) and \ref{betas}f) provide a zoomed view of panels b) and e), respectively, allowing for a clearer examination of the behavior of the robust estimators. It can be seen that in both schemes $C_1$ and $C_3$, the RES estimator performs worse than the other robust methods, while the MT estimator shows the best performance, slightly outperforming CH. Overall, both MT and CH estimators demonstrate good performance on uncontaminated data and strong resistance to outliers on contaminated data.

\begin{figure}
\begin{center}
     \includegraphics[width=0.66\textwidth]{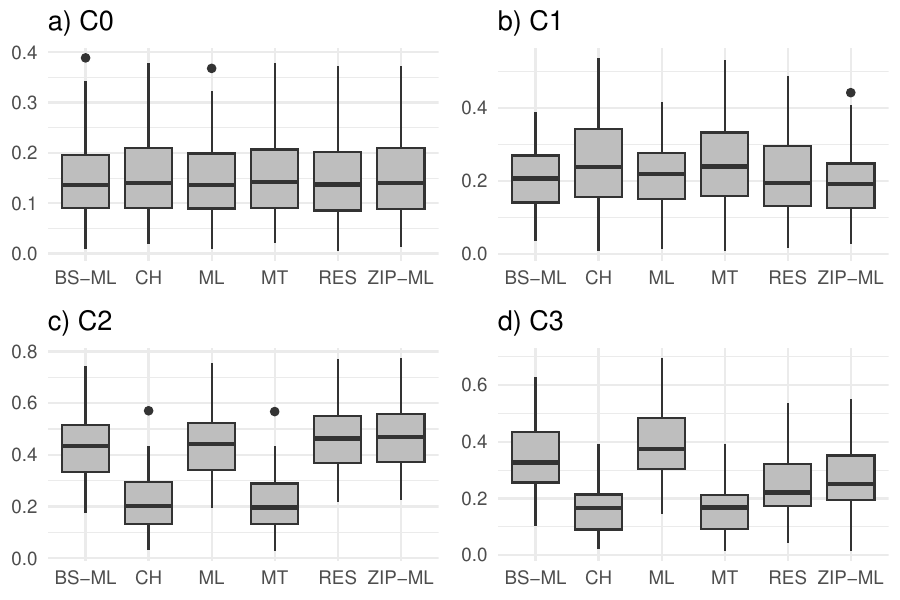}
    \caption{$L_2$-norm distance from $\b{\gamma}_0$ to each estimator. The panels a), b), c) and d) represent each contamination scheme.}
    \label{gammas}
\end{center}
\end{figure}
In Figure \ref{gammas} the $L_2$-norm distances between the $\b{\gamma}$ estimators and the true parameter value $\b{\gamma}_0$ are plotted. Specifically, in Figure \ref{gammas}a), it can be observed that all estimators achieve similar performance. In this case, the RES and ZIP-ML estimators are not affected by the specification of the model because it does not include a nonparametric component in the $\pi_i$ parameter. Figure \ref{gammas}b) shows that contamination under scheme $C_1$ induces a slight inefficiency in the robust MT, CH, and RES estimators. In the scenarios with contamination in the predictor variable, $C_2$ and $C_3$, the robust MT and CH estimators clearly outperform the other methods. This result is consistent, since the weights used to control extreme values of the covariates prevent false zeros from affecting the MT and CH estimators. The RES estimator performs poorly under this type of contamination, possibly due to the lack of weights that control extreme values on $\mathbf{x}$.

\begin{figure}
\begin{center}
    \includegraphics[width=\textwidth]{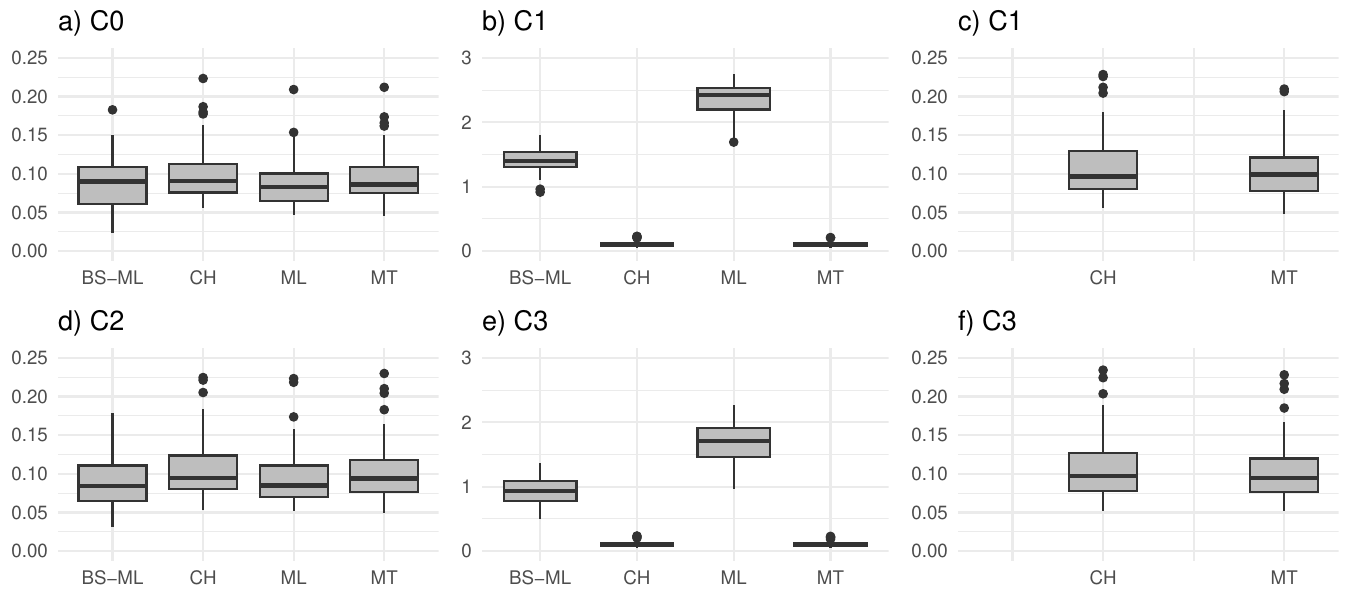}
    \caption{RMSE of each estimator of the nonparametric component $m_0$. The panels a), b), d) and e) represent each contamination scheme. c) and f) are zoomed views of the panels b) and e), respectively.}
    \label{m}
\end{center}
\end{figure}
Figure \ref{m} plots the RMSE values for the estimators of the function $m$, for those methods that allow its estimation. In scheme $C_0$, the BS-ML estimator performs slightly better than the others. In the remaining schemes, the behavior of the estimators is quite similar to that of the $\b{\beta}_0$ estimators described in Figure \ref{betas}, as the $\b{\beta}$ parameter and the function $m$ are estimated by minimizing the same objective function and, therefore, contaminated data affect both in a similar way. Specifically, we observe that under scheme $C_2$, the performance of the estimators is similar to that under $C_0$, while in schemes $C_1$ and $C_3$, the non-robust estimators are affected by extreme values. The performance of the robust estimators can be better appreciated in Figures \ref{m}c) and \ref{m}f), which provide zoomed views of panels b) and e), respectively. These plots show that the MT estimator achieves slightly better performance than the CH estimator.

\begin{figure}
\begin{center}
    \includegraphics[width=\textwidth]{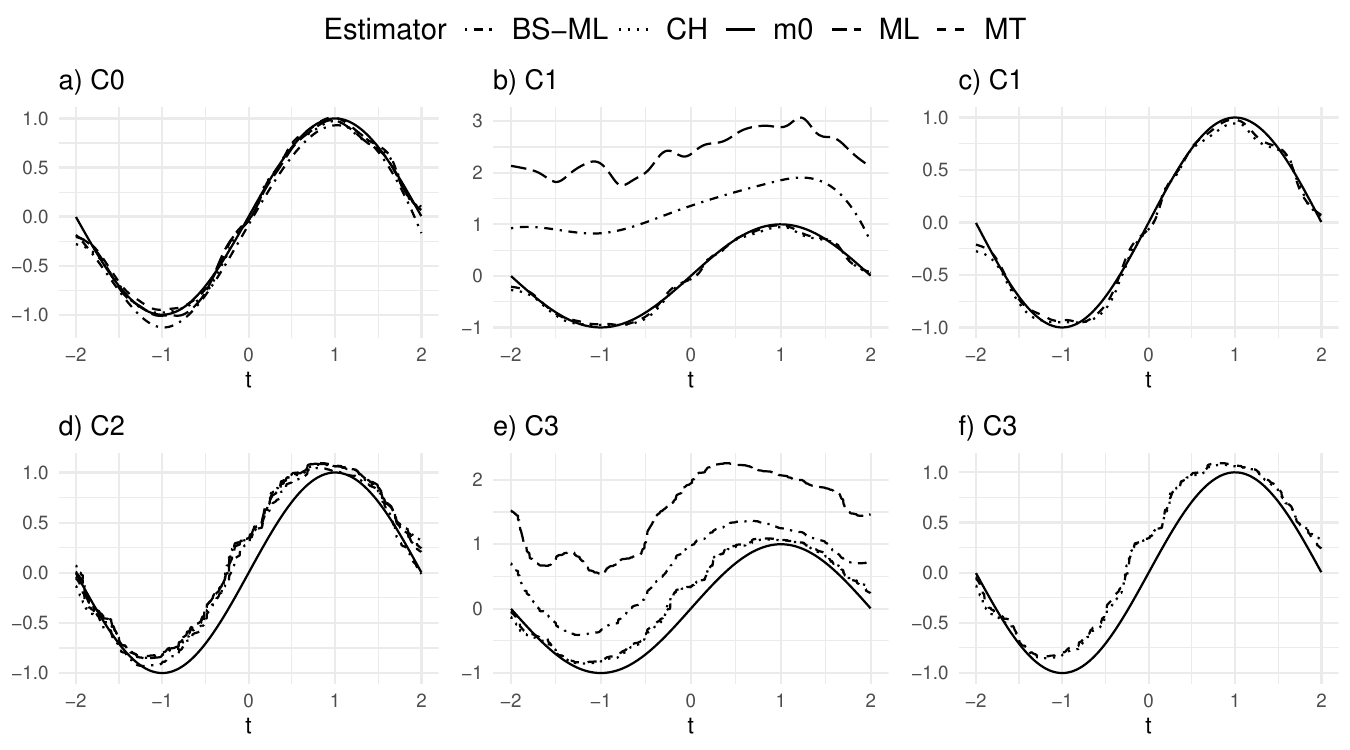}
    \caption{True function $m_0$ and its estimates obtained from a specific simulated dataset under each contamination scheme.}
    \label{ms}
\end{center}
\end{figure}
Figure \ref{ms} illustrates an example of the estimated function $m$ for a particular sample. It can be observed that under schemes $C_1$ and $C_3$, the likelihood-based estimators, ML and BS-ML, perform poorly, whereas the robust estimators maintain good performance. In contrast, under schemes $C_0$ and $C_2$, all estimators exhibit similarly good behavior.

\section{Example}\label{real_data}

In this section, a real-world data example is presented to evaluate the predictive power of the estimators. The example is based on data from a public health survey conducted in Indonesia in 1997, for more details see \cite{Frankenberg00}. The response variable of interest is the number of days that working-age ($18-60$ years) individuals missed their main activities due to illness in the previous four weeks self-reported by the respondent. The explanatory variables are household annual per capita income ($x_1$ in thousands), gender ($x_2 = 1$ for women and $0$ otherwise),  the household hygiene index ($x_3$ ranges from $0$ to $5$, from best to worst), and age ($t$ ranges from $18$ to $60$ years). One observation is randomly selected from each household to avoid the problem of possible intra-household correlation. We suppose that the population is divide into the low-risk group, which generates structural zeros; and the high-risk group, whose counts can be modeled using a Poisson regression model. We assumed that the proportion of low-risk subjects could be affected by the predictor variables $\z = (1, \x^\top, t)^\top$, where $\x^\top = (x_1,x_2,x_3)^\top$.

The number of days that individuals missed their main activities due to illness in the previous four weeks ranges from 0 to 28 with a mean of $1.05$ and variance of $10.16$. The variance is much larger than the mean, indicating potential overdispersion. As shown in Figure \ref{barplot}, the data contain more zero visits than a typical
Poisson distribution. Specifically, the percentage of the zero count is $6359/5073 \approx 80\%$. In addition, there are some particularly extreme counts.
\begin{figure}
\begin{center}
    \includegraphics[width=0.7\textwidth]{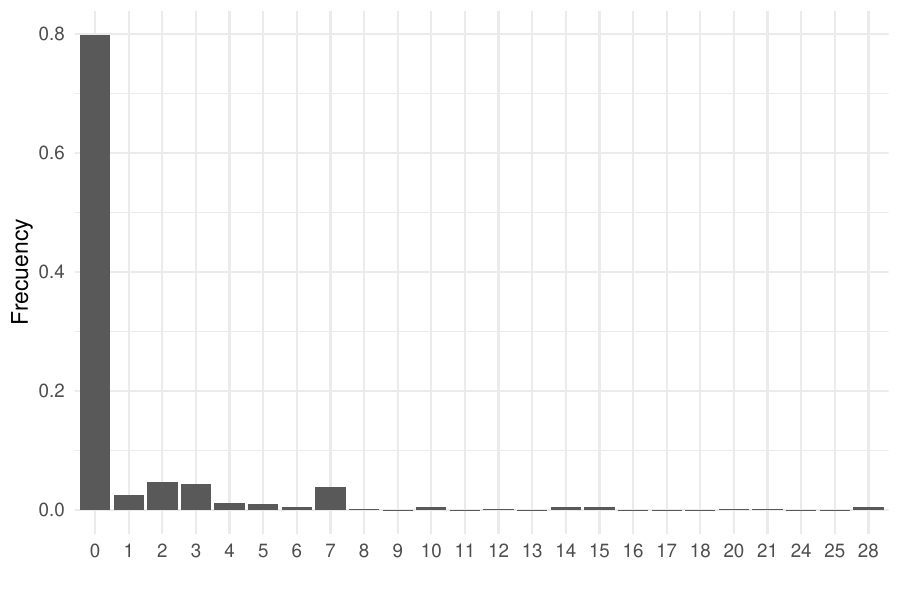}
    \caption{Barplot of the number of days of missed primary activities due to illness in the past 4 weeks self-reported by the respondent.}
    \label{barplot}
\end{center}
\end{figure}

This data were also analyzed by \cite{Lam06} using a semiparametric ZIP regression model. The authors mention that the age variable is generally not significant when assuming a linear relationship in a parametric setting, primarily because the effects of age can be very different across age groups. Therefore, they postulate that the age variable has a nonlinear relationship with the natural logarithm of the average number of days of absence from main activities due to illness among subjects in the high-risk group. In this article, the PLZIP model is employed to study the relationship between the number of days of absence and covariates. Based on the assumption of \cite{Lam06} we include the variable age in the nonparametric component, so that
\begin{align}
    \logit(\pi)&=\gamma_0 + \gamma_1 x_1 + \gamma_2 x_2 + \gamma_3 x_3 + \gamma_4 t,\\
    \log(\lambda)&=\beta_1 x_1 + \beta_2 x_2 + \beta_3 x_3 +m(t).
\end{align}
We estimate the model using the estimators proposed in this article, as well as those previously presented in Section \ref{simulations}. To provide a measure of the prediction error, we performed a 5-fold cross-validation and calculated the trimmed mean of the $20\%$ squared prediction error within each fold.

The results are presented in Figure \ref{error_data}. This indicates that predictions made using the maximum likelihood function (ZIP-ML, BS-ML, and ML) exhibit greater errors than those obtained with models estimated robustly (RES, MT, and CH), suggesting the presence of outliers, that is, a few observations that deviate from the model. 

Between the likelihood-based estimators, ZIP-ML, which does not include a non-parametric component, shows the worse performance. The ML estimator is overly adaptive to the data, due to the flexibility offered by kernel-based estimators, which can be a problem in the presence of extreme values in the data. The BS-ML estimator, which estimate the nonparametric component via B-splines, seems to be a slightly better option.

On the other hand, between the robust estimators, the predictions obtained using the PLZIP model with the MT and CH estimators proposed in this article, present smaller errors than the predictions made using the ZIP model with RES estimators. This suggests that, in addition to preferring the use of robust estimators, nonparametric modeling of the age variable is more appropriate than linear modeling.

\begin{figure}
\begin{center}
    \includegraphics[width=0.6\textwidth]{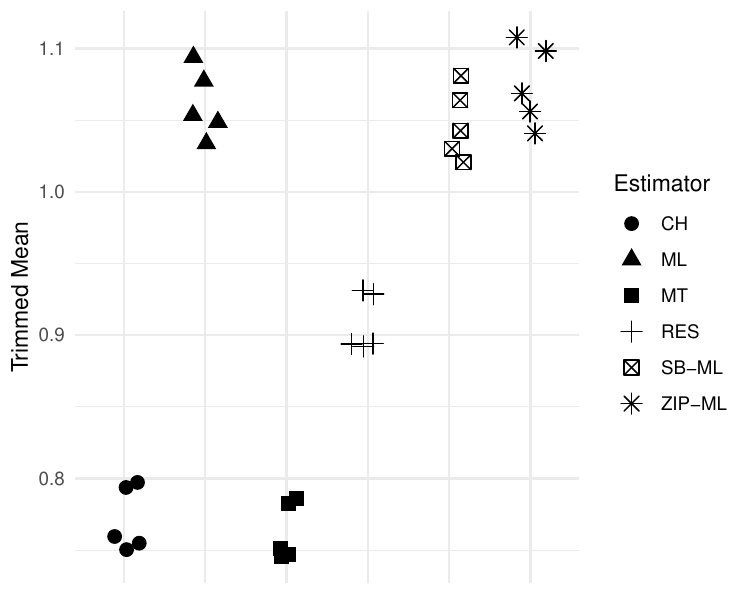}
    \caption{Trimmed mean of the quadratic error for each estimator in each fold.}
    \label{error_data}
\end{center}
\end{figure}

Table \ref{estimates} reports the estimates of the regression parameters $\b{\beta}$ and $\b{\gamma}$ obtained under the different estimation procedures. Several remarks can be made. First, the signs and magnitudes of the coefficients are generally consistent across methods. Regarding the estimates of $\b{\beta}$, income ($x_1$) shows a negative effect, indicating that higher socioeconomic status is associated with fewer absent days due to illness. Gender ($x_2$) generally shows a positive effect (except for the ZIP-ML and ML estimators), suggesting that women tend to report more absents due to illness. The hygiene index ($x_3$) also has a positive coefficient, indicating that poorer hygiene conditions increase the absents. On the other hand, the estimators of $\b{\gamma}$ show similar patterns across all methods: higher income ($x_1$) increases the probability of belonging to the low-risk group, while being female ($x_2 = 1$), poor household hygiene ($x_3$), and age ($t$) decrease it.

Figure \ref{estimates_m} shows the estimates of $m(t)$ for each partially linear method. It can be seen that the likelihood-based estimates fluctuate but generally show an increasing trend in the effect of age on the response. While the robust estimators show that the effect of age decreases until age 30 and then increases, remaining approximately constant between 50 and 60 years. This corroborates the fact that age does not have a simple linear effect on the response and suggests including $m(t)$ as a nonparametric component in the PLZIP specification. This allows the age effect to vary flexibly across groups, explaining the better performance of the robust partially linear estimators observed in the cross-validation study.

\begin{table}[ht]
\centering
\begin{tabular}{lcccccc}
  \hline
 & ZIP-ML & ML & SB-ML & RES & MT & CH \\ 
  \hline
$\beta_0$ & 1.633 & - & - & 1.393 & - & - \\ 
  $\beta_1$ & -0.017 & -0.008 & -0.017 & -0.018 & -0.27 & -0.302 \\ 
  $\beta_2$ & -0.009 & -0.002 & 0.05 & 0.048 & 0.167 & 0.176 \\ 
  $\beta_3$ & 0.013 & 0.015 & 0.016 & 0.019 & 0.003 & 0.006 \\ 
  $\beta_4$ & 0.098 & - & - & 0.084 & - & - \\ 
   \hline
$\gamma_0$ & 1.606 & 1.606 & 1.604 & 1.592 & 1.546 & 1.549 \\ 
  $\gamma_1$ & 0.04 & 0.041 & 0.04 & 0.04 & 0.075 & 0.075 \\ 
  $\gamma_2$ & -0.396 & -0.396 & -0.394 & -0.393 & -0.371 & -0.372 \\ 
  $\gamma_3$ & -0.048 & -0.048 & -0.048 & -0.047 & -0.048 & -0.047 \\ 
  $\gamma_4$ & -0.109 & -0.11 & -0.11 & -0.106 & -0.098 & -0.096 \\ 
   \hline
\end{tabular}
\label{estimates}
\caption{$\b{\beta}$ and $\b{\gamma}$ estimates.} 
\end{table}

\begin{figure}
\begin{center}
    \includegraphics[width=0.9\textwidth]{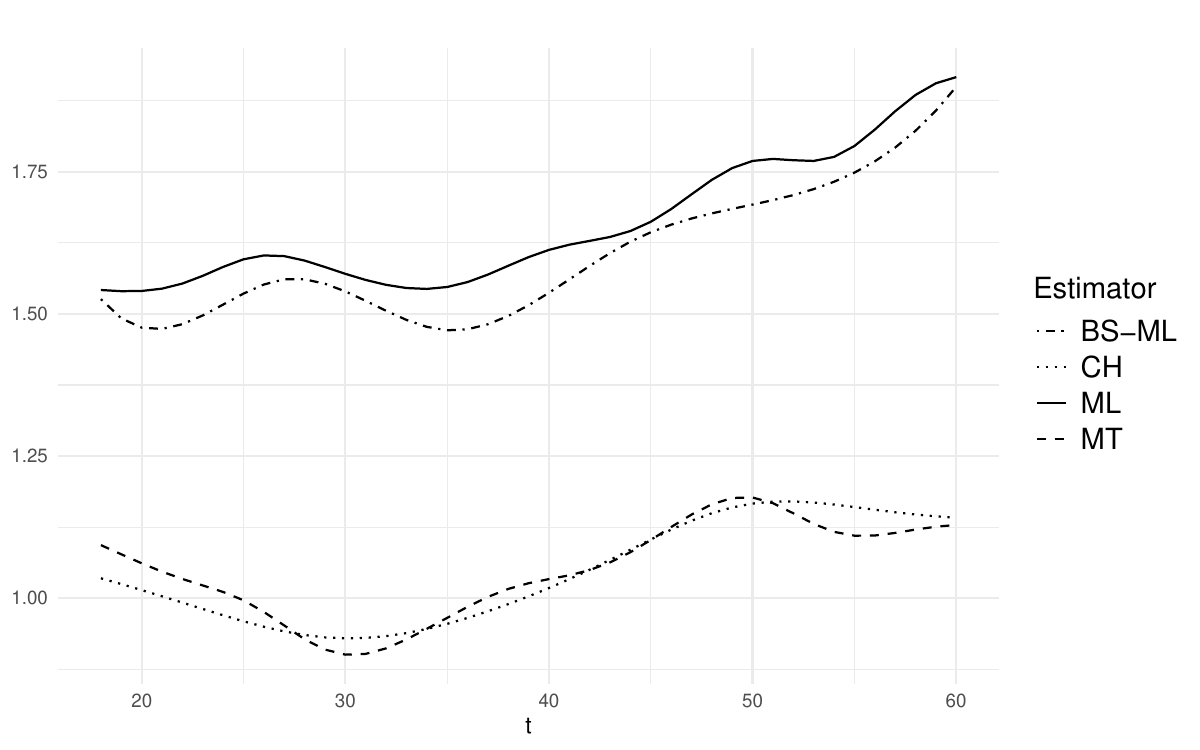}
    \caption{$m(t)$ estimates using public health survey data.}
    \label{estimates_m}
\end{center}
\end{figure}

This preliminary analysis shows that the use of partially linear models in a context of zero-inflated data, combined with robust estimation, becomes a powerful predictive tool in the face of the possible presence of extreme values and complex relationships between variables.

\section{Conclusions}

This study introduced an estimation procedure for the PLZIP regression model that allows obtaining robust estimators. 

The ML estimators of the parameters $\b{\beta}$ and the nonparametric function $m$ are susceptible to the presence of outliers in the response variable. Similarly, the ML estimator of the parameter $\b{\gamma}$ exhibits notable sensitivity to false zeros and extreme values in the covariates, resulting in poor performance under contamination. In contrast, the robust CH and MT estimators demonstrated  resistance to contamination in both the response and covariates. Both estimators showed similar performance, with the MT estimator slightly outperforming the CH in all simulation schemes. In non-contamination schemes, both robust estimators maintained comparable performance to the likelihood-based estimators.

The application of robust estimators in the analysis of real-world public health data reaffirmed their predictive superiority over those based on the likelihood. The robustly estimated models exhibited lower prediction errors, suggesting the presence of outliers or slight deviations from model assumptions in the data. Furthermore, the results indicated that nonparametric age modeling is more appropriate than a linear approach, highlighting the importance of the PLZIP model structure in capturing complex and nonlinear relationships.

It is worth mentioning that the estimation procedure developed in this article can be naturally extended to the doubly semiparametric framework. In this setting, the logistic link function incorporates a nonparametric component, relaxing the parametric assumptions and allowing for greater modeling flexibility. The implementation of this extension only requires adapting Step 1 of the procedure in the same way as for the estimation of $\b{\beta}$ and $m$. This modification results in a more versatile tool, applicable to a larger class of problems in complex data analysis scenarios.

In conclusion, the robust estimation methodology for the PLZIP model introduced in this work not only addresses the limitations of likelihood-based methods in the presence of outliers but also offers a powerful and versatile predictive tool. Therefore, the CH and MT estimators effectively combine robustness and efficiency, so their application in studies involving PLZIP models is highly recommended.

\section*{Acknowledgements}

This paper has been supported by the MATH-AmSud international grant 22MATH-07. We appreciate the support and funding provided by this initiative, which has been crucial in completing this work.

\appendix 
\section{}\label{appendix}

\textbf{Proof of Lemma \ref{lemma}  (a).}

Define
    \begin{align}
        R_{n,1}(\eta,\b{\beta},\tau|\b{\theta}) &= (nh_n)^{-1} \sum_{i=1}^n (1-E_\b{\theta}(w_i|y_i))\rho(y_i, \b{x_i}^\top\b{\beta} + \eta)\omega_1(\x_i) K((t_i-\tau)/h_n), \\
        R_{n,0}(\tau) &= (nh_n)^{-1} \sum_{j=1}^n K\left(\frac{t_j-\tau}{h_n}\right)
    \end{align}
and $Q_{n,1}(\eta,\b{\beta},\tau|\b{\theta})=R_{n,1}(\eta,\b{\beta},\tau|\b{\theta})/R_{n,0}(\tau)$. Then,     
    \begin{align}
        &\sup_{\b{\beta} \in \mathcal{K}\
        \eta \in \mathbb{R}} ||Q_{n,1}(\eta,\b{\beta},\cdot|\b{\theta}) - Q_1(\eta,\b{\beta},\cdot|\b{\theta})||_{0,\infty} = \\
        & \sup_{\b{\beta} \in \mathcal{K},\
        \eta \in \mathbb{R}} ||R_{n,1}(\eta,\b{\beta},\cdot|\b{\theta}) - Q_1(\eta,\b{\beta},\cdot|\b{\theta})R_{n,0}(\cdot)||_{0,\infty} \left(  \inf_{\tau \in \mathcal{T}_0} R_{n,0}(\tau)\right)^{-1} \leq \\
        & \Big( \sup_{\b{\beta} \in \mathcal{K}, \ 
       \eta \in \mathbb{R}} ||R_{n,1}(\eta,\b{\beta}, \cdot|\b{\theta}) - E_0(R_{n,1}(\eta,\b{\beta}, \cdot|\b{\theta}))||_{0,\infty} \\
        & + \sup_{\b{\beta} \in \mathcal{K},\
        \eta \in \mathbb{R}} ||E_0(R_{n,1}(\eta,\b{\beta}, \cdot|\b{\theta})) - Q_1(\eta,\b{\beta}, \cdot|\b{\theta})E_0(R_{n,0}(\cdot))||_{0,\infty}\\ 
        & \sup_{\b{\beta} \in \mathcal{K},\
        \eta \in \mathbb{R}} ||Q_1(\eta,\b{\beta}, \cdot|\b{\theta})||_{0,\infty} ||R_{n,0}(\cdot) -E_0(R_{n,0}(\cdot)) ||_{0,\infty}\Big) \left(  \inf_{\tau \in \mathcal{T}_0} R_{n,0}(\tau)\right)^{-1} .
    \end{align}
    C4 implies that $E_0(R_0(t)) = \int K(u) f_T(t-uh_n)du>A_1(\mathcal{T}_\delta)$. Therefore it is enough to prove the following convergences:
    \begin{enumerate}[(A)]
    \item $\sup_{\b{\beta} \in \mathcal{K},\ 
        \eta \in \mathbb{R}} ||R_{n,1}(\eta,\b{\beta}, \cdot|\b{\theta}) - E_0(R_{n,1}(\eta,\b{\beta}, \cdot|\b{\theta}))||_{0,\infty} \cs 0$,
    \item $\sup_{\b{\beta} \in \mathcal{K},\
        \eta \in \mathbb{R}} ||E_0(R_{n,1}(\eta,\b{\beta}, \cdot|\b{\theta})) - Q_1(\eta,\b{\beta}, \cdot|\b{\theta})E_0(R_{n,0}(\cdot))||_{0,\infty} \cs 0 $,
    \item $\sup_{\b{\beta} \in \mathcal{K},\
        \eta \in \mathbb{R}} ||Q_1(\eta,\b{\beta}, \cdot|\b{\theta})||_{0,\infty} ||R_{n,0}(\cdot) -E_0(R_{n,0}(\cdot)) ||_{0,\infty} \cs 0 $.
    \end{enumerate}
To prove (A), Theorem 37 of \cite{Pollard84} is applied to the family of functions
    \begin{align}
        \mathcal{F}_n = \Big\lbrace f_{\eta,\b{\beta},\tau}^n (y,\x,t|\b{\theta}) = \frac{(1-E_\b{\theta}(w|y))\rho(y,\x^\top \b{\beta}+\eta)\omega_1(\x)K((t-\tau)/h_n)}{||\rho||_\infty||\omega_1||_\infty||K||_\infty}, \\
        \b{\beta} \in \mathcal{K}, \ \eta \in \mathbb{R}, \ \tau \in \mathcal{T}_0 \Big\rbrace,
    \end{align}
    where $||\rho||_\infty = \sup_{y,u} |\rho(y,u)|$, $||\omega_1||_\infty = \sup_\x |\omega_1(\x)|$ y $||K||_\infty = \sup_{v} |K(v)|$. Note that $\mathcal{F}_n $ depends on $n$ through $h_n$. To ensure the conditions of the theorem, consider the family
    \begin{align}
        \mathcal{K}_n = \left\lbrace k_\tau^n(t) =  K\left(\frac{t-\tau}{h_n}\right), \tau \in \mathcal{T}_0 \right\rbrace,
    \end{align}    
    where $K$ is a bounded variation function on $\mathbb{R}$ satisfying $K\leq||K||_\infty$. Then, by Lemma 22 (ii) of \cite{Nolan87}, the graphs of functions in $\mathcal{K}_n$ have polynomial discrimination. Therefore, by Lemma 25 of \cite{Pollard84}, there are real values $A_1$ and $W_1$ (depending only on the discriminant polynomial) such that for any $0<\varepsilon<1$ and probability measure $Q$, $\mathcal{N}(\varepsilon||K||_\infty, \mathcal{K}_n, L^1(Q)) \leq A_1 \varepsilon^{-W_1}$. So, there are constants $A_2$ and $W_2$ such that $\sup_Q \mathcal{N}(\mathcal{F}_n||K||_\infty,\varepsilon, L^1(Q))<A_2\varepsilon^{-W_2}$ for all $0<\varepsilon<1$. In addition, since $|1-E_\b{\theta}(w_i|y_i)| < 1$, $|f_{\eta,\b{\beta},\tau}^n|<1$ for all $f_{\eta,\b{\beta},\tau}^n \in \mathcal{F}_n$. Furthermore,
    \begin{align}
        E_0((f_{\eta,\b{\beta},\tau}^n (y,\x,t|\b{\theta})))^2 &\leq E_0(f_{\eta,\b{\beta},\tau}^n (y,\x,t|\b{\theta}))) \\
        &= E_0\left( \frac{(1-E_\b{\theta}(w|y))\rho(y,\x^\top \b{\beta}+\eta)\omega_1(\x)K((t-\tau)/h_n)}{||\rho||_\infty||\omega_1||_\infty||K||_\infty} \right) \\
        &\leq E_0\left( \frac{K((t-\tau)/h_n)}{||K||_\infty} \right) \\
        &= \frac{1}{||K||_\infty} \int K\left(\frac{t-\tau}{h_n}\right) f_T(t)dt \\
        (l = (t-\tau)/h_n) &= \frac{h_n}{||K||_\infty} \int K(l) f_T(lh_n+\tau)dl \\
        &\leq \frac{h_n ||f_T||_\infty}{||K||_\infty} \int K(l)dl \\
        &= h_n ||f_T||_\infty||K||_\infty^{-1} \doteq \delta_n^2.
    \end{align}
    Taking $\alpha_n=1$, $n\delta_n^2 \alpha_n^2 >> \log (n)$, 
    \begin{align}
        \sup_{\mathcal{F}_n} \Big|\Big|\frac{1}{n} \sum_{i=1}^n f_{\eta,\b{\beta},\tau}^n (y_i,\x_i,t_i|\b{\theta})) - E_0(f_{\eta,\b{\beta},\tau}^n (y,\x,t|\b{\theta})))\Big|\Big| << \delta_n^2 \alpha_n \quad \text{c.s.},
    \end{align}
    it means that
    \begin{align}
        \frac{1}{ ||f_T||_\infty||\rho||_\infty||\omega_1||_\infty}  \sup_{\b{\beta} \in \mathcal{K},\ \eta \in \mathbb{R}} ||R_{n,1}(\eta,\b{\beta}, \cdot|\b{\theta})) - E_0(R_{n,1}(\eta,\b{\beta}, \cdot|\b{\theta})))||_{0,\infty} \cs 0,
    \end{align} 
    and the first factor is finite, so (A) is proven.

    To prove (B), note that 
    \begin{align}
       & |E_0(R_{n,1}(\eta,\b{\beta}, \tau|\b{\theta}))) - Q_1(\eta,\b{\beta}, \tau|\b{\theta}))E_0(R_{n,0}(\tau))| \\ 
        & = h_n^{-1} \Big| \int Q_1(\eta,\b{\beta},t|\b{\theta})) K\left(\frac{t-\tau}{h_n}\right)f_T(t)dt \\
        & - \int Q_1(\eta,\b{\beta},\tau|\b{\theta})) K\left(\frac{t-\tau}{h_n}\right)f_T(t)dt\Big|\\
        & = \Big|\int (Q_1(\eta,\b{\beta},lh_n+\tau|\b{\theta}))-Q_1(\eta,\b{\beta},\tau|\b{\theta}))) K(l)f_T(lh_n+\tau)dl\Big|\\
        & \leq ||f_T||_\infty \int \Big|Q_1(\eta,\b{\beta},lh_n+\tau|\b{\theta}))-Q_1(\eta,\b{\beta},\tau|\b{\theta}))\Big| K(l)dl.
    \end{align}
    The fact that $h_n \rightarrow 0$ and the condition C5 imply that
    \begin{equation}
        \sup_{\b{\beta} \in \mathcal{K},\
        \eta \in \mathbb{R}} ||E_0(R_{n,1}(\eta,\b{\beta}, \cdot|\b{\theta}))) - Q_1(\eta,\b{\beta}, \cdot|\b{\theta}))E_0(R_{n,0}(\cdot))||_{0,\infty} \rightarrow 0.
    \end{equation}
    To prove (C) it is enough to show that 
    \begin{align}\label{c3}
        ||R_{n,0}(\cdot) -E_0(R_{n,0}(\cdot)) ||_{0,\infty} \cs 0.
    \end{align}
    Since $|1-E_\b{\theta}(w|y)|<1$,
    \begin{equation}
        \sup_{\b{\beta} \in \mathcal{K},\ \eta \in \mathbb{R}} ||Q_1(\eta,\b{\beta}, \cdot| \b{\theta})||_{0,\infty} 
    \leq ||\rho||_\infty ||\omega_1||_\infty,
    \end{equation} 
    with $\rho$ and $\omega_1$ bounded. 
    
    The hypothesis of Theorem 37 of \cite{Pollard84} are verified for the family $\mathcal{K}_n$. Indeed, without loss of generality, if the kernel function is bounded by $1$
    \begin{align}
        \int K((t-\tau)/h_n)^2 f_T(t)dt &\leq \int K((t-\tau)/h_n) f_T(t)dt \\ 
        &= h_n \int K(l) f_T(lh_n+\tau)dl \\
        &\leq h_n ||f_T||_\infty,
    \end{align}
    and taking $\alpha_n= 1$, by hypothesis $n\delta_n^2\alpha_n^2 >> \log (n)$. Thus,
    \begin{align}
    \sup_{\mathcal{K}_n} \left| \frac{1}{n}\sum_{i=1}^n K((t_i-\tau)/h_n) -  E_0( K((t-\tau)/h_n)) \right| << ||f_T||_\infty h_n \quad a.s.,
    \end{align}
    and then
    \begin{align}
    \sup_{\mathcal{K}_n}|R_{n,0}(\tau) -E_0(R_{n,0}(\tau))| \cs 0,
    \end{align}
    which concluded the proof of (a).
$\hfill\blacksquare$

\

\textbf{Proof of Theorem \ref{theorem_beta}. } 
To prove (a), note that
    \begin{align}
        \sup_{\b{\beta} \in \mathbb{R}^{p}} |Q_{n,2}(\Tilde{m}, \b{\beta}| \b{\theta})-Q_2(m_0,\b{\beta}| \b{\theta})| &\leq
        \sup_{\b{\beta} \in \mathbb{R}^{p}} |Q_{n,2}(\Tilde{m},\b{\beta}| \b{\theta})-Q_{n,2}(m_0,\b{\beta}| \b{\theta})| \\
        &+ \sup_{\b{\beta} \in \mathbb{R}^{p}} |Q_{n,2}( m_0,\b{\beta}| \b{\theta})-Q_2( m_0,\b{\beta}| \b{\theta})|.
    \end{align}
    So, it is enough to show the convergences
    \begin{enumerate}[(A)]
    \item $\sup_{\b{\beta} \in \mathbb{R}^{p}} |Q_{n,2}(\Tilde{m},\b{\beta}| \b{\theta})-Q_{n,2}(m_0,\b{\beta}| \b{\theta})| \cs 0$, 
    \item $\sup_{\b{\beta} \in \mathbb{R}^{p}} |Q_{n,2}(m_0,\b{\beta}| \b{\theta})-Q_2(m_0,\b{\beta}| \b{\theta})| \cs 0$.
    \end{enumerate}
    Let $\mathcal{T}_0\in \mathcal{T}$ be a compact set such that $P(t_i \notin \mathcal{T}_0) < \varepsilon$, the first-order Taylor expansion of $\rho$ around $\x_i^\top\b{\beta} + m_0(t_i)$ for each $i=1,\ldots,n$ implies that
     \begin{align}
       & \sup_{\b{\beta} \in \mathbb{R}^{p}} |Q_{n,2}(\Tilde{m},\b{\beta}| \b{\theta})-Q_{n,2}(m_0,\b{\beta}| \b{\theta})|\\ 
        & \leq \sup_{\b{\beta} \in \mathbb{R}^{p}} \frac{1}{n}\sum_{i=1}^n |\Psi(y_i, u_i)||\Tilde{m}(t_i) - m_0(t_i)||\omega_1(\x_i)|\mathcal{I}(t_i \in \mathcal{T}_0) \\
        & + \sup_{\b{\beta} \in \mathbb{R}^{p}} \frac{1}{n}\sum_{i=1}^n \Big( |\rho(y_i, \x_i^\top\b{\beta} + \Tilde{m}(t_i))|  + |\rho(y_i, \x_i^\top\b{\beta} + m_0(t_i))| \Big) |\omega_1(\x_i)| \mathcal{I}(t_i \notin \mathcal{T}_0)\\
        & \leq ||\Psi||_{\infty} ||\Tilde{m}- m_0||_{0,\infty} ||\omega_2||_{\infty}  + 2||\rho||_{\infty}||\omega_2||_{\infty} \frac{1}{n}\sum_{i=1}^n \mathcal{I}(t_i \notin \mathcal{T}_0),
    \end{align} 
    where $u_i \in (\x_i^\top\b{\beta} + 
        \Tilde{m}(t_i), \x_i^\top\b{\beta} + m_0(t_i)) \ \forall \ i=1,\ldots, n$.

        Since $\Psi$ and $\omega_2$ are bounded, and $||\Tilde{m}- m_0||_{0,\infty} \cs 0$ in $\mathcal{T}_0$, the first term tends to 0 a.s.  Concerning the second term, since $\rho$ is bounded and $E(\mathcal{I}(t_i \notin \mathcal{T}_0)) = P(t_i \notin \mathcal{T}_0) < \varepsilon$ the Strong Law of Large Numbers implies that the second term tends to 0 a.s.

        On the other hand, the hypothesis over the covering number of $\mathcal{H}$, the fact that the functions in $\mathcal{H}$ have the uniform bound $||\rho||_\infty ||\omega_1||_\infty$ and Theorem 24 of \cite{Pollard84} ensure that (B) holds, which concludes the proof of (a).

        To prove (b), let $\widehat{\b{\beta}}_k$ be a convergent sequence of $\widehat{\b{\beta}}$ with limit $\b{\beta}^*$. Suppose without loss of generality that $\widehat{\b{\beta}} \cs \b{\beta}^*$. 
        
    If $||\b{\beta}^*||<\infty$, from (a) $Q_{n,2}(\widehat{m},\b{\beta}_0|\b{\theta}) - Q_2(m_0,\b{\beta}_0|\b{\theta}) \cs 0$. Moreover, from the continuity of $Q_2$, $Q_{n,2}(\Tilde{m},\widehat{\b{\beta}}|\b{\theta})-Q_2(m_0,\b{\beta}^*|\b{\theta}) \cs 0$, indeed
        \begin{align}
            |Q_{n,2}(\Tilde{m},\widehat{\b{\beta}}|\b{\theta})-Q_2(m_0,\b{\beta}^*|\b{\theta})| &\leq |Q_{n,2}(\Tilde{m},\widehat{\b{\beta}}|\b{\theta})-Q_2(m_0,\widehat{\b{\beta}}|\b{\theta})| \\
            &+ |Q_2(m_0,\widehat{\b{\beta}}|\b{\theta})-Q_2(m_0,\b{\beta}^*|\b{\theta})|.
        \end{align}
        The first term tends to 0 a.s. by (a). Regarding the second term, the first-order Taylor expansion of $\rho$ around $\x^\top \b{\beta}^* + m_0(t)$ implies that 
        \begin{align}
            |Q_2(m_0,\widehat{\b{\beta}}|\b{\theta})-Q_2(m_0,\b{\beta}^*|\b{\theta})| &\leq E_0\Big(|1-E_\b{\theta}(w|y)||\rho(y,\x^\top\widehat{\b{\beta}}+m_0(t))\\
            &-\rho(y,\x^\top\b{\beta}^*+m_0(t))||\omega_1(\x)|\Big) \\
            &\leq E_0\left(|\Psi(y,u)||\x^\top(\widehat{\b{\beta}} - \b{\beta}^*)||\omega_1(\x)|\right) \\
            &\leq ||\Psi||_\infty E_0\left(||\x^\top||\right)||\widehat{\b{\beta}} - \b{\beta}^*|| ||\omega_1||_\infty,
        \end{align} 
        where $u \in (\x^\top\widehat{\b{\beta}} + 
        \eta, \x^\top\b{\beta}^* + \eta)$, $\Psi$ and $\omega_1$ are bounded functions, $\x$ has finite second moment, and $\widehat{\b{\beta}} \cs \b{\beta}^*$. Therefore $Q_2(m_0,\widehat{\b{\beta}}|\b{\theta})-Q_2(m_0,\b{\beta}^*|\b{\theta}) \cs 0$, which implies $Q_{n,2}(\Tilde{m},\widehat{\b{\beta}}|\b{\theta})-Q_2(m_0,\b{\beta}^*|\b{\theta}) \cs 0$.

        Note that $Q_{n,2}(\Tilde{m},\widehat{\b{\beta}}|\b{\theta}) \leq Q_{n,2}(\Tilde{m},\b{\beta}_0|\b{\theta})$, since $\widehat{\b{\beta}} = \argmin_{\b{\beta} \in \mathbb{R}^p} Q_{n,2}(\Tilde{m},\b{\beta}|\b{\theta})$. Then, taking a.s. limit $Q_2(m_0,\b{\beta}^*|\b{\theta}) \leq Q_2(m_0,\b{\beta}_0|\b{\theta})$, but $(\b{\beta}_0, m_0)$ is the only minimum of $Q_2$, so it must be $\b{\beta}^*=\b{\beta}_0$.

        If $||\b{\beta}^*||=\infty$, by (a) and the continuity of $Q_2$, $Q_{n,2}(\Tilde{m},\b{\beta}_0|\b{\theta}) - Q_2(m_0,\b{\beta}_0|\b{\theta}) \cs 0$ and $Q_{n,2}( \Tilde{m},\widehat{\b{\beta}}|\b{\theta})-Q_2(m_0,\b{\beta}^*|\b{\theta}) \cs 0$. Moreover, $Q_{n,2}(\Tilde{m},\widehat{\b{\beta}}|\b{\theta}) \leq Q_{n,2}(\Tilde{m},\b{\beta}_0|\b{\theta})$. Hence
         \begin{align}
             0 \geq Q_{n,2}( \Tilde{m},\widehat{\b{\beta}}|\b{\theta}) -Q_{n,2}(\Tilde{m},\b{\beta}_0|\b{\theta}) &\geq Q_{n,2}( \Tilde{m},\widehat{\b{\beta}}|\b{\theta})-Q_2(m_0,\b{\beta}^*|\b{\theta}) \\
             &- (Q_{n,2}(\Tilde{m},\b{\beta}_0|\b{\theta}) - Q_2(m_0,\b{\beta}_0|\b{\theta})) \\
             &+  Q_2(m_0,\b{\beta}^*|\b{\theta}) - Q_2(m_0,\b{\beta}_0|\b{\theta}).
         \end{align}
        Taking a.s. limit, 
        \begin{align}
           \lim_{||\b{\beta}|| \rightarrow \infty} Q_2(m_0,\b{\beta}|\b{\theta}) \geq Q_2(m_0,\b{\beta}_0|\b{\theta}),
        \end{align}
        which, by hypothesis, is absurd. Therefore it must be $||\b{\beta}^*||<\infty$.
$\hfill\blacksquare$

\bibliographystyle{elsarticle-num} 
\bibliography{Bibs}

\end{document}